\title{A Lower Bound for Polynomial Calculus with Extension Rule}
\author{Yaroslav Alekseev\thanks{Steklov Institute of Mathematics at St.~Petersburg, St.~Petersburg, Russia, and Chebyshev Laboratory at St.~Petersburg State University}}
\date{}
\definecolor{linkcolor}{HTML}{799B03}
\definecolor{urlcolor}{HTML}{799B03}
\newtheorem{definition}{Definition}
\newtheorem*{note}{Note}
\newtheorem*{lemma}{Lemma}
\newtheorem{theorem}{Theorem}[section]
\newtheorem{claim}[theorem]{Claim}
\newtheorem{corollary}[theorem]{Corollary}
\newtheorem*{claim*}{Claim}
\newtheorem*{claim-3-4}{Claim 3.4}
\newcommand {\para}[1] {\paragraph{#1}}
\newenvironment{proofclaim}{\par\noindent{\it Proof of claim}:}
{\vrule width 1ex height 1ex depth 0pt $_{\textrm{ Claim}}$ \vspace{0.5\baselineskip}}
\newcommand{\cSTOC}[1]{\nth{\intcalcSub{#1}{1968}}\ Annual\ ACM\ Symposium\ on\ Theory\ of\ Computing\ (STOC\ #1)}
\newcommand{\ComputationalComplexity}{Computational Complexity}
\newcommand{\STOC}[1]{Proceedings\ of\ the\ \cSTOC{#1}}
\newcommand{\shortECCC}[2]{\texttt{\href{http://eccc.hpi-web.de/report/\ifnumcomp{#1}{>}{93}{19}{20}#1/#2/}{eccc:TR#1-#2}}}
\newcommand{\parseECCC}[1]{
\StrSubstitute{#1}{TR}{}[\tmpstring]%
\IfSubStr{\tmpstring}{/}{ 
\StrBefore{\tmpstring}{/}[\ecccyear]%
\StrBehind{\tmpstring}{/}[\ecccreport]%
}{
\StrBefore{\tmpstring}{-}[\ecccyear]%
\StrBehind{\tmpstring}{-}[\ecccreport]%
}%
\shortECCC{\ecccyear}{\ecccreport}}
\begin{document}
\maketitle
\thispagestyle{empty}
\begin{abstract}
In this paper we study an extension of the Polynomial Calculus proof system where we can introduce new variables and take a square root. We prove that an instance of the subset-sum principle, the bit-value principle  $1 + x_1 + 2 x_2 + \ldots 2^{n - 1} x_n = 0$ ($\mathsf{BVP}_n$), requires refutations of exponential bit size over $\mathbb{Q}$ in this system. 

Part and Tzameret \cite{PT18_new} proved an exponential lower bound on the size of $\mathsf{Res}$-$\mathsf{Lin}$ (Resolution over linear equations \cite{RT07}) refutations of $\mathsf{BVP}_n$. We show that our system p-simulates $\mathsf{Res}$-$\mathsf{Lin}$ and thus we get an alternative exponential lower bound for the size of $\mathsf{Res}$-$\mathsf{Lin}$ refutations of $\mathsf{BVP}_n$.
\end{abstract}

\section{Introduction}
In essence, the study of propositional proof complexity started with the work of Cook and Reckhow \cite{CR79}.   The first superpolynomial bound on the proof size was proved in a pioneering work of Tseitin \cite{Tse68} for regular resolution.    Since then, many proof systems have been studied, some of them are logic-style (working with disjunctions, conjunctions and other Boolean operations) and some of them are algebraic (working with arbitrary polynomials).

In this work, we consider extensions of two systems, an algebraic one and a logic-style one. 
\para{Algebraic proof systems.}
Lower bounds for algebraic systems started with an exponential lower bound for the $\mathsf{Nullstellensatz}$ \cite{BeameIKPP96} system. The main system considered in this paper is based on the $\mathsf{Polynomial}$ $\mathsf{Calculus}$ system \cite{CEI96}, which is a dynamic version of $\mathsf{Nullstellensatz}$. Many exponential lower bounds are known for the size of $\mathsf{Polynomial}$ $\mathsf{Calculus}$ proofs for tautologies like the Pigeonhole Principle \cite{MR1691494,IPS99} and Tseitin tautologies \cite{BussGIP01}. While most results concern the representation of Boolean values by 0 and 1, there are also exponential lower bounds over the $\{-1, +1\}$ basis \cite{Sok20}.

Many extensions of $\mathsf{Polynomial}$ $\mathsf{Calculus}$ and $\mathsf{Nullstellensatz}$ have been considered before. Buss et al. \cite{BussIKPRS96} showed that there is a  tight connection between the lengths of constant-depth Frege proofs with $MOD_p$ gates and the length of $\mathsf{Nullstellensatz}$ refutations using extension axioms. Impagliazzo,  Mouli and Pitassi \cite{IMP19_new} showed that a depth-3 extension of $\mathsf{Polynomial}$ $\mathsf{Calculus}$ called $\Sigma\Pi\Sigma$-$\mathsf{PC}$ p-simulates $\mathsf{CP}^*$ (an inequalities-based system, $\mathsf{Cutting}$ $\mathsf{Planes}$ \cite{CCT87, CHVATAL1989455} with coefficients written in unary) over $\mathbb{Q}$. Also, they showed that a stronger extension of $\mathsf{Polynomial}$ $\mathsf{Calculus}$, called $\mathsf{Depth}$-$k$-$\mathsf{PC}$, p-simulates $\mathsf{Cutting}$ $\mathsf{Planes}$ and another inequalities-based system $\mathsf{Sum}$-$\mathsf{of}$-$\mathsf{Squares}$; the simulations can be conducted over $\mathbb{F}_{p^m}$ for arbitrary prime number $p$ if $m$ is sufficiently large.

Also very strong extensions were considered: Grigoriev and Hirsch \cite{GH03} considered algebraic systems over formulas. Grochow and Pitassi \cite{GP14} introduced the Ideal Proof System, $\mathsf{IPS}$, which can be considered as the version of $\mathsf{Nullstellensatz}$ where all polynomials are written as  algebraic circuits (see also \cite{Pit97, Pit98} for earlier versions of this system).

\para{Logic-style systems.}
While exponential lower bounds for low-depth proof systems (both algebraic and logical ones) are known for decades, the situation with higher depth proof systems is much worse. The present knowledge is limited to exponential bounds for constant-depth Frege systems over de Morgan basis (that is, without xor's or equivalences) \cite{Ajt94, BussIKPRS96, BeameIKPP96}. In particular, no truly exponential lower bounds are known for the size of refutations of formulas in CNF in (dag-like) systems that work over disjunctions of equations or inequalities (see \cite{Kra98-Discretely} as the first paper defining these systems and containing partial results). $\mathsf{Res}$-$\mathsf{Lin}$ (defined in \cite{RT07}), working with disjunctions of linear equations, is the second system considered in our paper, and it can be viewed as a generalization of Resolution. Part and Tzameret \cite{PT18_new} proved an exponential lower bound for (dag-like) $\mathsf{Res}$-$\mathsf{Lin}$ refutations over $\mathbb{Q}$ for the bit-value principle $\mathsf{BVP}_n$. Although this is the first exponential lower bound for this system, the instance does not constitute a translation of a formula in CNF.  Itsykson and Sokolov \cite{ITSYKSON2020102722} consider another extension of the resolution proof system that operates with disjunctions of linear equalities over $\mathbb{F}_2$ named $\mathsf{Res}(\oplus)$ and proved an exponential lower bound on the size of tree-like $\mathsf{Res}(\oplus)$-proofs. 

\subsection{Our results}
We extend $\mathsf{Polynomial}$ $\mathsf{Calculus}$ with two additional rules. One rule allows to take a square root (it was introduced by Grigoriev and Hirsch \cite{GH03} in the context of transforming refutation proofs of non-Boolean formulas into derivation proofs; our motivation to take square roots is to consider an algebraic system that is at least as strong as $\mathsf{Res}$-$\mathsf{Lin}$ even for non-Boolean formulas, see below). Another rule is an algebraic version of Tseitin's extension rule, which allows to introduce new variables. We will denote our generalization of $\mathsf{Polynomial}$ $\mathsf{Calculus}$ as $\mathsf{Ext}$-$\mathsf{PC}^{\surd}$.

In this work we give a positive answer to the question raised in \cite{IMP19_new} asking for a technique
for proving size lower bounds on Polynomial Calculus without proving any degree lower bounds. 
Also we give an answer to another question raised in \cite{IMP19_new} 
by proving an exponential lower bound for the system with an extension rule even stronger than that in $\Sigma\Pi\Sigma$-$\mathsf{PC}$, 
which is another extension of Polynomial Calculus presented in the aforementioned work.

We consider the following subset-sum instance, called $\mathsf{Binary}$ $\mathsf{Value}$ $\mathsf{Principle}$ ($\mathsf{BVP}_n$) \cite{AGHT19_new,PT18_new}:
$$
1 + x_1 + 2 x_2 + \ldots 2^{n - 1} x_n = 0,
$$
and prove exponential lower bound for the size of $\mathsf{Ext}$-$\mathsf{PC}_{\mathbb{Q}}^{\surd}$ refutations of $\mathsf{BVP}_n$. Note that $\mathsf{Binary}$ $\mathsf{Value}$ $\mathsf{Principle}$ does not correspond to the translation of any CNF formula and thus the question about proving size lower bound on the refutation of formulas in CNF without proving degree lower bounds \textbf{remains open}.

\begin{theorem}
Any $\mathsf{Ext}$-$\mathsf{PC}_{\mathbb{Q}}^{\surd}$ refutation of $\mathsf{BVP}_n$  requires size $2^{\Omega(n)}$.
\end{theorem}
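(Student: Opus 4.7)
The plan is to adapt the coefficient bit-complexity technique developed in \cite{AGHT19_new} for the Ideal Proof System on the $\mathsf{BVP}_n$ instance to the dynamic, extension-enhanced setting of $\mathsf{Ext}$-$\mathsf{PC}_{\mathbb{Q}}^{\surd}$. The basic idea is to assign to each polynomial appearing in a refutation a numerical measure that is small on axioms, grows only mildly per derivation step, yet must be exponentially large on any polynomial that equals $1$ modulo the $\mathsf{BVP}_n$ ideal.

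First I would normalize the refutation. The extension axioms can be linearly ordered as $y_j = p_j(\bar x, y_1, \ldots, y_{j-1})$, and the square-root rule $f^2 = 0 \Rightarrow f = 0$ placed on top: whenever it is applied, $f$ is integral over the current ideal, so we can treat the whole proof as living inside a fixed algebraic extension of $\mathbb{Q}[\bar x]$. Each derivation step then becomes a transparent algebraic operation (addition, multiplication by a variable, introduction of an extension variable together with its defining polynomial, or taking a square root modulo the ideal).

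Next I would define a measure $\mu$ that captures the maximum bit-complexity of a coefficient after fully unfolding all extension variables back into the original $\bar x$'s. By induction on the derivation, $\mu$ grows by at most a polynomial factor per step, so $\mu \leq s^{O(1)}$ for a proof of size $s$. On the semantic side, any polynomial certifying $1$ modulo the ideal $(x_i^2 - x_i,\; 1 + \sum_i 2^{i-1} x_i)$ must have $\mu = 2^{\Omega(n)}$: the function $1/(1 + \sum_i 2^{i-1} x_i)$ takes the $2^n$ distinct rational values $1, 1/2, \ldots, 1/2^n$ on the Boolean cube, which forces its multilinear representation (and hence any Nullstellensatz-style certificate) to carry coefficients whose common denominator is divisible by $\mathrm{lcm}(1,2,\ldots,2^n)$, of bit length $\Theta(2^n)$. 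Combining the two bounds gives $s \geq 2^{\Omega(n)}$.

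The main obstacle will be controlling the extension rule, which can syntactically hide an arbitrarily complex polynomial behind a single new variable and so immediately defeats any na\"ive per-line coefficient count. My approach is to work with the recursively unfolded polynomial and argue semantically: after unfolding, the rationals appearing as coefficients are exactly those that were introduced along the proof, so their total bit complexity is polynomially bounded by the proof size. The square-root rule is in turn controlled by observing that if $f^2$ has coefficient bit-complexity $B$ then so does $f$ up to a constant factor, since standard height inequalities for algebraic numbers behave well under squaring.
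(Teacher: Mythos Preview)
Your proposal has a genuine gap in handling the extension rule. You define $\mu$ as the maximum coefficient bit-complexity \emph{after fully unfolding} the extension variables, and claim it grows only polynomially per derivation step. This fails: a chain of extension axioms such as $y_1 = x_1 + x_2$, $y_{k+1} = y_k^2$ has total syntactic size $O(k)$, yet unfolding $y_k$ gives $(x_1+x_2)^{2^{k-1}}$, whose central binomial coefficient already has bit-length $\Theta(2^{k})$. Your proposed fix, that ``the rationals appearing as coefficients [after unfolding] are exactly those that were introduced along the proof,'' is simply false --- unfolding manufactures new coefficients by multiplying and adding the old ones. There is also a more basic structural mismatch: the last line of a $\mathsf{PC}$-style refutation is a nonzero scalar $M$, and its ``unfolded'' form is just $M$ itself, so your semantic lower bound (which is really a bound on the cofactor of $1+\sum_i 2^{i-1}x_i$ in a Nullstellensatz certificate, not on the constant it produces) does not apply to any individual line of the proof.

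The paper sidesteps both issues by never unfolding. It evaluates the entire proof at each Boolean point $\vec b\in\{0,1\}^n$: this fixes concrete integer values for the extension variables and turns every line into a single integer. A short induction shows that this integer is always divisible by $1+\sum_i 2^{i-1}b_i$; hence over $\mathbb{Z}$ the final constant $M$ is divisible by every prime $p<2^n$ and has bit-length $\Omega(2^n)$. For $\mathbb{Q}$, the paper clears denominators to produce a $\mathbb{Z}$-refutation whose final constant equals $M$ times a product of denominators occurring in the original proof; since that product must supply all primes below $2^n$, and its bit-length is bounded by the original proof size, the $2^{\Omega(n)}$ bound follows. The invariant you are missing is the \emph{set of prime divisors} of the evaluated lines: it is preserved by every rule (including extension and square root, the latter because $p\mid a^2\Rightarrow p\mid a$ for prime $p$), whereas raw unfolded coefficient bit-complexity is not.
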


The technique we use for proving this lower bound is similar to the technique for proving conditional $\mathsf{IPS}$ lower bound in \cite{AGHT19_new}.
However, since $\mathsf{Ext}$-$\mathsf{PC}$ proof system is weaker than $\mathsf{Ideal}$ $\mathsf{Proof}$ $\mathsf{System}$, 
we get an unconditional lower bound. The main idea of conditional lower bound in \cite{AGHT19_new} is to prove complexity lower bound on the free term in the end of $\mathsf{IPS}$-refutation of $\mathsf{BVP}_n$ over $\mathbb{Z}$ and then show that $\mathsf{IPS}_\mathbb{Z}$ simulates $\mathsf{IPS}_{\mathbb{Q}}$.
One difference is that instead of concentrating on the \emph{complexity} of computing the free term of the proof, we concentrate on \emph{prime numbers}
being mentioned in the proof (and thus appearing as factors of the free term).

Then we consider $\mathsf{Res}$-$\mathsf{Lin}$ and show that $\mathsf{Ext}$-$\mathsf{PC}_{\mathbb{Q}}^{\surd}$ simulates $\mathsf{Res}$-$\mathsf{Lin}$ 
and thus get an alternative lower bound for $\mathsf{Res}$-$\mathsf{Lin}$.

\begin{corollary}[Informal]
Any $\mathsf{Res}$-$\mathsf{Lin}$ refutation of $\mathsf{BVP}_n$ requires size $2^{\Omega(n)}$.
\end{corollary}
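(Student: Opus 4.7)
My plan is to reduce the corollary to Theorem~1 via a p-simulation: I will show that every $\mathsf{Res}$-$\mathsf{Lin}$ refutation of size $S$ can be converted into an $\mathsf{Ext}$-$\mathsf{PC}_{\mathbb{Q}}^{\surd}$ refutation of size $\mathrm{poly}(S,n)$; the $2^{\Omega(n)}$ bound from Theorem~1 then forces $S = 2^{\Omega(n)}$.

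The simulation is line-by-line. To a $\mathsf{Res}$-$\mathsf{Lin}$ disjunction $D = \bigvee_{j=1}^{k}(L_j = 0)$ I attach the polynomial $Q_D := \prod_{j} L_j$, and I aim to derive $Q_D = 0$ in $\mathsf{Ext}$-$\mathsf{PC}_{\mathbb{Q}}^{\surd}$ whenever the $\mathsf{Res}$-$\mathsf{Lin}$ proof derives $D$. Since $Q_D$ can be of exponential degree, I use the algebraic extension rule to name each $Q_D$ by a fresh variable $y_D$, adding the extension axiom $y_D = L \cdot y_{D'}$ whenever a step adjoins a new literal $L=0$ to an existing clause $D'$, and the same device names the subproducts $P_1, P_2$ that arise inside resolution steps. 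Each extension axiom is of size polynomial in the corresponding $\mathsf{Res}$-$\mathsf{Lin}$ line.

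Each inference rule of $\mathsf{Res}$-$\mathsf{Lin}$ then admits a short simulation. The Boolean axiom $(x_i = 0) \vee (x_i - 1 = 0)$ translates to $x_i(x_i - 1) = 0$, which is immediate from the Polynomial Calculus axiom $x_i^2 - x_i = 0$. Weakening multiplies the current polynomial by a single linear form and is handled by a constant number of multiplications by a variable together with one linear combination. The resolution rule, going from $L_1 \cdot P_1 = 0$ and $L_2 \cdot P_2 = 0$ to $(a_1 L_1 + a_2 L_2) \cdot P_1 \cdot P_2 = 0$, is simulated by multiplying the first hypothesis by $a_1$ and by the extension variable for $P_2$, the second by $a_2$ and by the extension variable for $P_1$, and summing. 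At the end of the proof the empty clause corresponds to the polynomial $1$, producing a $\mathsf{PC}$-style contradiction.

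The principal obstacle is keeping the $\mathsf{Ext}$-$\mathsf{PC}^{\surd}$ lines short in spite of the exponential degree of the $Q_D$'s and $P_i$'s: every reference to such a product must go through its extension variable rather than its explicit expansion, which is exactly what the bookkeeping in the second paragraph arranges. The $\surd$ rule is not invoked in this Boolean simulation, but it is what makes $\mathsf{Ext}$-$\mathsf{PC}^{\surd}$ strong enough to absorb $\mathsf{Res}$-$\mathsf{Lin}$ in non-Boolean settings as well, which motivates its presence in the system of Theorem~1. Once the simulation is verified, any $\mathsf{Res}$-$\mathsf{Lin}$ refutation of $\mathsf{BVP}_n$ of size $S$ yields an $\mathsf{Ext}$-$\mathsf{PC}_{\mathbb{Q}}^{\surd}$ refutation of the same instance of size $\mathrm{poly}(S,n)$, and applying Theorem~1 gives $S = 2^{\Omega(n)}$.
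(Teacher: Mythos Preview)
Your overall plan---simulate $\mathsf{Res}$-$\mathsf{Lin}$ in $\mathsf{Ext}$-$\mathsf{PC}_{\mathbb{Q}}^{\surd}$ and then invoke the lower bound---matches the paper. But there is a genuine gap: you do not handle the \textbf{Contraction} rule (from $A\vee L\vee L$ derive $A\vee L$), and your assertion that ``the $\surd$ rule is not invoked in this Boolean simulation'' is precisely wrong because of it. In your encoding, $A\vee L\vee L$ corresponds to a derived polynomial of the form $P\cdot L^{2}$ (or, with your extension variables, $y_{A}\cdot y_{L}^{2}$), and Contraction asks you to derive $P\cdot L$. In Polynomial Calculus over $\mathbb{Q}$ there is no way to pass from a derived $q^{2}$ to $q$; this is exactly the step for which the square-root rule was introduced. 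The paper's simulation (Theorem~4.1) makes this explicit: from $y_{1}\cdots y_{t-2}\cdot y_{t-1}^{2}=0$ it first multiplies up to $(y_{1}\cdots y_{t-1})^{2}=0$ and then applies $\surd$. Without treating Contraction your simulation is incomplete, since an arbitrary $\mathsf{Res}$-$\mathsf{Lin}$ refutation may (and typically will) merge duplicate disjuncts.

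Two smaller points. You also omit the \textbf{Simplification} rule (from $A\vee(k=0)$ with $k\neq 0$ derive $A$); this one \emph{is} easy---multiply by $1/k$---but it should be stated. And your bookkeeping differs from the paper's: you introduce one extension variable $y_{D}$ per clause, while the paper introduces one variable per \emph{affine form} $L$ occurring in the proof and represents $\widehat{D}$ as the monomial $y_{j_{1}}\cdots y_{j_{t}}$. The paper's choice makes the Contraction step especially transparent (the repeated disjunct literally becomes a squared variable), but either scheme can be made to work once you add the square-root argument for Contraction.
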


Note that while Part and Tzameret \cite{PT18_new} prove an exponential lower bound on the number of lines in the proof, we prove a bound on the proof size (essentially, on the bit size of scalars appearing in the proof).

\subsection{Organization of the paper}
In Section~\ref{sec:prelim} we recall the definition of Polynomial Calculus ($\mathsf{PC}$) and give the definitions of Polynomial Calculus with square root ($\mathsf{PC}^{\surd}$) and Extended Polynomial Calculus with square root ($\mathsf{Ext}$-$\mathsf{PC}^{\surd}$).

In Section~\ref{section-3} we prove exponential lower bound on the size of $\mathsf{Ext}$-$\mathsf{PC}_{\mathbb{Q}}^{\surd}$ refutations of $\mathsf{BVP}_n$. We start with considering derivations with integer coefficients ($\mathsf{Ext}$-$\mathsf{PC}_{\mathbb{Z}}^{\surd}$) and show that the free term in the end of such refutation of $\mathsf{BVP}_n$ is not just large but also is divisible by all primes less then $2^n$ (see Theorem~\ref{lower bound for integers depth-inf}). Then, in Theorem~\ref{lower bound q}, we convert proofs over $\mathbb{Q}$ into proofs over $\mathbb{Z}$ without changing the set of primes mentioned in the proof and thus get an $\mathsf{Ext}$-$\mathsf{PC}_{\mathbb{Q}}^{\surd}$ lower bound.

In Section~\ref{section-4} we show that $\mathsf{Ext}$-$\mathsf{PC}_{\mathbb{Q}}^{\surd}$ simulates $\mathsf{Res}$-$\mathsf{Lin}$ and thus we get an alternative lower bound for the size of $\mathsf{Res}$-$\mathsf{Lin}$ refutations of $\mathsf{BVP}_n$.

\section{Preliminaries}\label{sec:prelim}
In this paper we are going to work with polynomials over integers or rationals. 
We define the size of a polynomial roughly as the total length of the bit representation of its coefficients:
\begin{definition} [Size of a polynomial]\label{def:PolynomialSize}
Let $f$ be an arbitrary integer or rational polynomial in variables $\{x_1, \ldots, x_n\}$. 
\begin{itemize}
    \item If $f \in \mathbb{Z}[x_1, \ldots, x_n]$ then $Size(f) = \sum \lceil \log |a_i| \rceil$ where $a_i$ are the coefficients of $f$.
    \item If $f \in \mathbb{Q}[x_1, \ldots, x_n]$ then $Size(f) = \sum \lceil \log |p_i| \rceil + \lceil \log|q_i| \rceil $ where $p_i \in \mathbb{Z}$, $q_i \in \mathbb{N}$ and $\frac{p_i}{q_i}$ are the coefficients of $f$.
\end{itemize}
\end{definition}

\begin{definition}[Polynomial Calculus]\label{def:PC}
Let $\Gamma = \{P_1, \ldots, P_m\} \subset \mathbb{F}[x_1, \ldots, x_n]$ be a set of polynomials in variables $\{x_1, \ldots, x_n\}$ over a field $\mathbb{F}$ such that the system of equations $P_1 = 0, \ldots, P_m = 0$ has no solution. A Polynomial Calculus refutation of $\Gamma$ is a sequence of polynomials $R_1, \ldots, R_s$ where $R_s = 1$  and for every $l$ in $\{1, \ldots, s\}$, $R_l \in \Gamma$ or is obtained through one of the following derivation rules for $j, k < l$
\begin{itemize}
    \item $R_l = \alpha R_j + \beta R_k$ for $\alpha, \beta \in \mathbb{F}$
    \item $R_l = x_i R_k$
\end{itemize}
The size of the refutation is $\sum_{l = 1}^s Size(R_l)$. The degree of the refutation is $\max_l deg(R_l)$.

\end{definition}

Now we consider a variant of Polynomial Calculus proof system with additional \textbf{square root derivation rule} (see \cite{GH03}). Moreover, we extend our definition from fields to \textbf{rings}.

\begin{definition}[Polynomial Calculus with square root]\label{def:PCS}
Let $\Gamma = \{P_1, \ldots, P_m\} \subset R[x_1, \ldots, x_n]$ be a set of polynomials in variables $\{x_1, \ldots, x_n\}$ over a ring $R$ such that the system of equations $P_1 = 0, \ldots, P_m = 0$ has no solution. A $\mathsf{PC}^{\surd}_{R}$ refutation of $\Gamma$ is a sequence of polynomials $R_1, \ldots, R_s$ where $R_s = M$ for some constant $M \in R, M \neq 0$ and for every $l$ in $\{1, \ldots, s\}$, $R_l \in \Gamma$ or is obtained through one of the following derivation rules for $j, k < l$
\begin{itemize}
    \item $R_l = \alpha R_j + \beta R_k$ for $\alpha, \beta \in R$
    \item $R_l = x_i R_k$ for some $i \in \{1, \ldots, n\}$
    \item $R_l^2 = R_k$ for some $i \in \{1, \ldots, n\}$
\end{itemize}
The size of the refutation is $\sum_{l = 1}^s Size(R_l)$, where $Size(R_l)$ is the size of the polynomial $R_l$. The degree of the refutation is $\max_l deg(R_l)$.
\end{definition}

\begin{note}
We will consider $\mathbb{Q}$ or $\mathbb{Z}$ as the ring $R$. For both of those rings, if we consider \textbf{Boolean} case, where axioms $x_i^2 - x_i = 0$ added, our system will be complete, which means that for every unsatisfiable over $\{0, 1\}$ assignment system $\{f_i(\vec x) = 0\}$  there is a  $\mathsf{PC}^{\surd}_{R}$ refutation. Also, note that if $R$ is a domain and $P^2 = 0$ for some $P \in R[\vec x]$, then $P = 0$.
\end{note}

We now define a variant of $\mathsf{PC}^{\surd}_{R}$, $\mathsf{Ext}$-$\mathsf{PC}^{\surd}_{R}$ where the proof system is additionally allowed to introduce new variables $y_i$ corresponding to arbitrary polynomials in the original variables $x_i$.

\begin{definition}[Extended Polynomial Calculus with square root]\label{def:Depth-inf-PC}
Let $\Gamma = \{P_1, \ldots, P_m\} \subset R[x_1, \ldots, x_n]$ be a set of polynomials in variables $\{x_1, \ldots, x_n\}$ over a ring $R$ such that the system of equations $P_1 = 0, \ldots, P_m = 0$ has no solution. A $\mathsf{Ext}$-$\mathsf{PC}^{\surd}_{R}$ refutation of $\Gamma$ is a $\mathsf{PC}^{\surd}_{R}$ refutation of a set 
$$\Gamma' = \{P_1, \ldots, P_m, y_1 - Q_1(x_1, \ldots, x_n), y_2 - Q_2(x_1, \ldots, x_n, y_1), \ldots, y_m - Q_m(x_1, \ldots, x_n, y_1, \ldots, y_{m - 1})\}
$$ 
where $Q_i \in R[\vec{x}, y_1, \ldots, y_{i - 1}]$ are arbitrary polynomials. 

The size of the $\mathsf{Ext}$-$\mathsf{PC}^{\surd}_{R}$ refutation is equal to the size of the $\mathsf{PC}^{\surd}_{R}$ refutation of $\Gamma'$.

\end{definition}

\section{Lower bound}\label{section-3}
In order to prove lower bound for the $\mathsf{Ext}$-$\mathsf{PC}^{\surd}_{\mathbb{Q}}$ proof system, we consider the following subset-sum instance \cite{AGHT19_new, PT18_new}:

\begin{definition}[Binary Value Principle $\mathsf{BVP}_n$]
The \textbf{binary value principle} over the variables $x_1,\dots, x_n$, $\mathsf{BVP}_n$  for short, is the following unsatisfiable system of linear equations:
$$
x_1 + 2 x_2 + \ldots 2^{n - 1} x_n + 1 = 0,
$$
$$
x_1^2 - x_1 = 0, \; x_2^2 - x_2 = 0, \; \ldots, \; x_n^2 - x_n = 0.
$$
\end{definition}

\begin{theorem}
\label{lower bound for integers depth-inf}
Any $\mathsf{Ext}$-$\mathsf{PC}^{\surd}_{\mathbb{Z}}$ refutation of $\mathsf{BVP}_n$  requires size $\Omega(2^n)$. Moreover, the absolute value of the constant in the end of our $\mathsf{Ext}$-$\mathsf{PC}^{\surd}_{\mathbb{Z}}$ refutation consists of at least $C \cdot 2^n$ bits for some constant $C > 0$. Also, the constant in the end of our $\mathsf{Ext}$-$\mathsf{PC}^{\surd}_{\mathbb{Z}}$ refutation is divisible by every prime number less than $2^n$.
\end{theorem}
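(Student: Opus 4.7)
The plan is to exploit the fact that, although $\mathsf{BVP}_n$ has no integer or Boolean solution, it does become satisfiable modulo every small prime, and to leverage this into a divisibility statement about the free term $M$ at the end of a refutation. Specifically, for every prime $p < 2^n$, the integer $p-1$ lies in $[0, 2^n-1]$, so we can write $p-1 = \sum_{i=1}^n \epsilon_i 2^{i-1}$ in binary with $\epsilon_i \in \{0,1\}$; then the assignment $x_i := \epsilon_i$ satisfies the Boolean axioms exactly and the subset-sum axiom modulo $p$, since $1 + \sum_i 2^{i-1}\epsilon_i = p \equiv 0 \pmod{p}$. This is the only structural input from $\mathsf{BVP}_n$ that I would need.

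Next I would prove a soundness lemma for $\mathsf{Ext}\text{-}\mathsf{PC}^{\surd}_{\mathbb{Z}}$ modulo any prime $p$. Given a Boolean satisfying assignment $\vec\epsilon$ as above, I would extend it to the extension variables by defining $\eta_i := Q_i(\vec\epsilon, \eta_1, \ldots, \eta_{i-1}) \bmod p$, which is well-defined by induction on $i$. Under the assignment $(\vec\epsilon, \vec\eta)$, every axiom of the extended system $\Gamma'$ evaluates to $0$ in $\mathbb{F}_p$. Then I would show by induction on the length of the derivation that every line $R_l$ evaluates to $0$ in $\mathbb{F}_p$ at $(\vec\epsilon, \vec\eta)$. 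The linear-combination and multiplication-by-variable rules are immediate. For the square-root rule $R_l^2 = R_k$, note that this is a polynomial identity, so after evaluation it yields the scalar identity $(R_l(\vec\epsilon,\vec\eta))^2 \equiv R_k(\vec\epsilon,\vec\eta) \pmod{p}$; since $R_k$ vanishes mod $p$ by induction, and $\mathbb{F}_p$ is a field (hence an integral domain), $R_l$ also vanishes mod $p$. In particular, the final line $R_s = M$ is a constant, so $M \equiv 0 \pmod{p}$.

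Applying this for every prime $p < 2^n$, I conclude that $M$ is divisible by $\prod_{p < 2^n} p$. By Chebyshev's first theorem (equivalently, the lower bound on the Chebyshev function $\theta$), $\log \prod_{p \le x} p = \theta(x) \ge c\, x$ for an absolute constant $c > 0$ and $x$ large enough, so $\log_2 |M| \ge C \cdot 2^n$ for some $C > 0$. Since $Size(R_s) = \lceil \log_2 |M| \rceil$ and the refutation size is at least the size of its last line, this yields the $\Omega(2^n)$ size lower bound and the divisibility and bit-length claims simultaneously. The only step I expect to require care is the soundness induction for the square-root rule: the argument crucially needs $\mathbb{F}_p$ to be a domain, and it is this feature (rather than the Boolean axioms or the extension rule, which both survive the reduction mod $p$ unchanged) that dictates working prime-by-prime instead of, for instance, modulo a single large composite modulus.
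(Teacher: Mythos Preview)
Your proposal is correct and follows essentially the same approach as the paper: evaluate the refutation at the Boolean point encoding $p-1$, extend to the new variables via the defining polynomials, and prove by induction on the derivation that every line vanishes modulo $p$, using that $\mathbb{F}_p$ is a domain to handle the square-root rule. The only cosmetic differences are that the paper keeps the evaluations as integers and tracks divisibility by $p$ (rather than reducing to $\mathbb{F}_p$ up front), and that the paper bounds $|M|$ via $(\pi(2^n))!$ together with the prime number theorem and Stirling, whereas your use of Chebyshev's lower bound on $\theta(x)$ is more direct.
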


\begin{proof}

Assume that $\{R_1, \ldots, R_t\}$ is the $\mathsf{Ext}$-$\mathsf{PC}^{\surd}_{\mathbb{Z}}$ refutation of $\mathsf{BVP}_n$. Then we know that $\{R_1, \ldots, R_t\}$ is $\mathsf{PC}_{\mathbb{Z}}^{\surd}$ refutation of some set 
$$
\Gamma' = \{G(\vec x), F_1(\vec x), \ldots, F_n(\vec x), y_1 - Q_1(\vec x), \ldots y_m - Q_m(\vec x, y_1, \ldots, y_{m - 1})\}
$$
where $G(\vec x) = 1 + \sum_{i = 1}^{i = n} 2^{(i - 1)} x_i$, $F_i(\vec x) = x_i^2 - x_i$ and $Q_i \in \mathbb{Z}[\vec x, y_1, \ldots, y_{i - 1}]$.

By the definition of $\mathsf{Ext}$-$\mathsf{PC}^{\surd}_{\mathbb{Z}}$ refutation we know that there exists an integer constant $M \neq 0$ such that $F_t = M$. 

\begin{claim}\label{cla:M-div-every-prime}
$M$ is divisible by every prime number less than $2^n$.
\end{claim}
\begin{proofclaim}
Consider arbitrary integer number $0 \le k < 2^n$ and its binary representation $b_{1}, \ldots, b_{n}$. Let $k + 1$ be \textbf{prime}. Then $G(b_1, \ldots, b_n) = k + 1$, $F_i(b_1, \ldots, b_n) = b_i^2 - b_i = 0$. Also consider integers $c_1, \ldots, c_m$ such that $c_i = Q_i(b_1, \ldots, b_n, c_1, c_2, \ldots, c_{i - 1})$. Now we will prove by induction that every integer number $R_i(b_1, \ldots, b_n, c_1, \ldots, c_m)$ is divisible by $k + 1$ and thus $M$ is divisible by every prime number less than $2^n$.

\bfseries Base case: \mdseries if $i = 1$, then $R_i = G(b_1, \ldots, b_n, c_1, \ldots, c_m) = k + 1$ or $R_i = F_i(b_1, \ldots, b_n, c_1, \ldots, c_m) = 0$ or $R_i(b_1, \ldots, b_n, c_1, \ldots, c_m) = c_i - Q_i(b_1, \ldots, b_n, c_1, \ldots, c_{i - 1}) = 0$ which means that $R_i$ is divisible by $k + 1$.

\bfseries Induction step: \mdseries suppose we know that $R_j$ is divisible by $k + 1$ for any $j \le i$. Now we will show it for $R_{i + 1}$. There are four cases:
\begin{enumerate}
    \item If $R_{i + 1} \in \Gamma'$, then this case is equivalent to the base case and $R_{i + 1}(b_1, \ldots, b_n, c_1, \ldots, c_m)$ is divisible by $k + 1$.
    \item If $R_{i + 1} = \alpha R_j + \beta R_s$ for $\alpha, \beta \in \mathbb{Z}$ and $j, s \le i$, then 
    $R_{i + 1}(b_1, \ldots, b_n, c_1, \ldots, c_m)$ is divisible by $k + 1$ because $R_j(b_1, \ldots, b_n, c_1, \ldots, c_m)$ and $R_s(b_1, \ldots, b_n, c_1, \ldots, c_m)$ are divisible by $k + 1$ and $\alpha$ and $\beta$ are integers.
    \item If $R_{i + 1} = x_j R_s$ or $R_{i + 1} = y_j R_s$, then $R_{i + 1}(b_1, \ldots, b_n, c_1, \ldots, c_m)$ is divisible by $k + 1$ because $R_s(b_1, \ldots, b_n, c_1, \ldots, c_m)$ is divisible by $k + 1$ and $b_i$ and $c_i$ are integers. 
    \item If $R_{i + 1}^2 = R_s$, then we know that $R_s(b_1, \ldots, b_n, c_1, \ldots, c_m)$ is divisible by $k + 1$. Suppose $R_{i + 1}(b_1, \ldots, b_n, c_1, \ldots, c_m)$ is not divisible by $k + 1$. Then $R_{i + 1}(b_1, \ldots, b_n, c_1, \ldots, c_m)^2$ is not divisible by $k + 1$ since $k + 1$ is \textbf{prime}. But $R_{i + 1}(b_1, \ldots, b_n, c_1, \ldots, c_m)^2 = R_s(b_1, \ldots, b_n, c_1, \ldots, c_m)$ which leads us to a contradiction. 
\end{enumerate}
Since every $R_{i}(b_1, \ldots, b_n, c_1, \ldots, c_m)$ is divisible by $k + 1$, we know that $M = R_{s}(b_1, \ldots, b_n, c_1, \ldots, c_m)$ is divisible by every $k + 1$ less than $2^n$, and in particular  $M$ is divisible by every prime number less than $2^n$.
\end{proofclaim}

So we know that $M$ is divisible by the product of all prime numbers less than $2^n$. Then we know that $|M| > (\pi(2^n))!$ where $\pi(2^n)$ is the number of all prime numbers less than $2^n$. By the prime number theorem $\pi(2^n) > C \frac{2^n}{n}$. By Stirling's approximation we get
$$
|M| > \left(C \frac{2^n}{n}\right)! > C' \cdot \left(C  \frac{2^n}{n}\right)^{C  \frac{2^n}{n}} > C'' \left(2^{\frac{n}{2}}\right)^{C  \frac{2^n}{n}} > C'' 2^{(2^n C_0)}
$$
which means that $M$ consists of at least $C_1 \cdot 2^n$ bits and therefore any $\mathsf{Ext}$-$\mathsf{PC}^{\surd}_{\mathbb{Z}}$ refutation of $\mathsf{BVP}_n$  requires size $\Omega(2^n)$. 

\end{proof}

In order to prove a lower bound over $\mathbb{Q}$, we need to convert an $\mathsf{Ext}$-$\mathsf{PC}^{\surd}_{\mathbb{Q}}$ proof into $\mathsf{Ext}$-$\mathsf{PC}^{\surd}_{\mathbb{Z}}$ proof.

\begin{theorem}\label{lower bound q}
Any $\mathsf{Ext}$-$\mathsf{PC}^{\surd}_{\mathbb{Q}}$ refutation of $\mathsf{BVP}_n$  requires size $\Omega(2^n)$.
\end{theorem}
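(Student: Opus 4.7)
The plan follows the strategy hinted at in the introduction: rather than tracking the complexity of the final constant $M$, I track \emph{which primes must appear} among the coefficients of the proof. Specifically, I aim to show that every prime $p \leq 2^n$ divides the numerator or the denominator (in lowest terms) of some coefficient of some line in any $\mathsf{Ext}$-$\mathsf{PC}^{\surd}_{\mathbb{Q}}$ refutation of $\mathsf{BVP}_n$. Combining this with Chebyshev's bound $\sum_{p \leq N,\, p\text{ prime}} \log p = \Theta(N)$ then immediately yields the $\Omega(2^n)$ size lower bound, since each such prime $p$ will force a contribution of at least $\log p$ to the size somewhere in the proof.

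The central step will be a reduction-modulo-$p$ argument. I assume for contradiction that some prime $p \leq 2^n$ does not appear in a refutation $\pi = (R_1, \ldots, R_t)$ with $R_t = M \in \mathbb{Q}^*$. Because $p$ divides no denominator, every coefficient lies in the localization $\mathbb{Z}_{(p)}$, and I can reduce $\pi$ line by line modulo $p$ to obtain a sequence $\bar R_1, \ldots, \bar R_t$ in $\mathbb{F}_p[\vec x, \vec y]$. A direct check shows that each derivation rule (linear combination, multiplication by a variable or a $y_i$, square root, and each extension axiom $y_i - Q_i$) is preserved under this reduction, so $\bar \pi$ is a valid $\mathsf{Ext}$-$\mathsf{PC}^{\surd}_{\mathbb{F}_p}$ derivation of the constant $\bar M$ from $\mathsf{BVP}_n$. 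Since $p - 1 < 2^n$, the binary digits $(b_1, \ldots, b_n)$ of $p - 1$ form a $\{0,1\}$-solution to $\mathsf{BVP}_n$ modulo $p$, which I extend to the auxiliary variables recursively by setting $c_i := \bar Q_i(\vec b, c_1, \ldots, c_{i-1})$. By soundness of the system over the domain $\mathbb{F}_p$, every line of $\bar\pi$ vanishes on this extended assignment; in particular $\bar M = 0$, forcing $p$ to divide the numerator of $M$ and contradicting the assumption that $p$ does not appear in $\pi$.

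The main subtlety I anticipate is the soundness verification for the square root and extension rules after the mod-$p$ reduction. The extension rule requires taking an assignment to $\vec x$ and extending it to $\vec y$ by running through the axioms $y_i - \bar Q_i$ in order — conceptually straightforward but needs to be spelled out. The square root rule relies crucially on $\mathbb{F}_p$ being a domain: if $\bar R_l^2 = \bar R_k$ and $\bar R_k$ vanishes on the assignment, then $(\bar R_l)^2 = 0$ at that point forces $\bar R_l = 0$ there. Once these soundness checks are in place, the final size count is mechanical: the size of any coefficient $a/b$ in lowest terms is at least $\sum_q v_q(a)\log q + \sum_q v_q(b)\log q$ with $v_q$ the $q$-adic valuation, so summing over all coefficients of $\pi$ and using that each prime $p \leq 2^n$ must contribute at least once gives $\mathrm{size}(\pi) \geq \sum_{p \leq 2^n} \log p = \Theta(2^n)$.
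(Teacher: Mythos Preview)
Your route is genuinely different from the paper's and conceptually cleaner: the paper does a syntactic conversion of the $\mathbb{Q}$-refutation into a $\mathbb{Z}$-refutation (rescaling the extension variables $y_i \mapsto T_i y_i$ to integralize the $Q_i$, rebuilding the derivation over the new axiom set, and then clearing denominators line by line so that the final integer constant is $M$ times a product of powers of all the denominators that ever appeared), whereas you argue semantically by reducing modulo $p$ and invoking soundness over $\mathbb{F}_p$. Your approach avoids the bookkeeping of Claim~3.4 and the subsequent inductive reconstruction entirely.

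There is, however, a real gap in your handling of the linear combination rule. Your hypothesis controls only the coefficients of the \emph{lines} $R_l$, but the scalars $\alpha,\beta$ in a step $R_l=\alpha R_j+\beta R_k$ are not coefficients of any line and are not counted in $\mathrm{Size}(\pi)$; nothing prevents $p$ from sitting in their denominators. Concretely, with $p=3$, $R_j=1+x$, $R_k=1+4x$, $\alpha=4/3$, $\beta=-1/3$ one gets $R_l=1$; every line coefficient avoids $3$, yet modulo $3$ we have $\bar R_j=\bar R_k=1+x$ while $\bar R_l=1$ is not in their $\mathbb{F}_3$-span, so $\bar\pi$ is \emph{not} a valid $\mathbb{F}_p$-derivation, and the soundness induction breaks at this step as well (over $\mathbb{Z}_{(p)}$ one has $v_j,v_k\in p\mathbb{Z}_{(p)}$ but $\alpha v_j+\beta v_k$ need not be). The fix is to enlarge what you track: assume for contradiction that $p$ divides no numerator/denominator of any line coefficient \emph{and} no denominator of any $\alpha,\beta$; then all $\alpha,\beta\in\mathbb{Z}_{(p)}$ and your induction goes through verbatim. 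To close the size count you must then observe that the $\alpha,\beta$ can without loss of generality be chosen of bit-size polynomial in the sizes of $R_j,R_k,R_l$ (Cramer's rule when $R_j,R_k$ are $\mathbb{Q}$-independent; a single coefficient ratio with $\beta=0$ otherwise), so the total bit-size of all scalars is $\mathrm{poly}(S)$ and $\mathrm{poly}(S)\ge\sum_{p\le 2^n}\log p$ still gives $S=2^{\Omega(n)}$. (The paper, incidentally, faces the analogous issue with its $\delta_i$'s and resolves it the same way.)
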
 

\begin{proof}
Assume that $\{R_1, \ldots, R_t\}$ is the $\mathsf{Ext}$-$\mathsf{PC}^{\surd}_{\mathbb{Q}}$ refutation of $\Gamma$ of the size $S$. Then we know that $\{R_1, \ldots, R_t\}$ is a $\mathsf{PC}_{\mathbb{Q}}^{\surd}$ refutation of some set $\Gamma' = \Gamma \cup \{y_1 - Q_1(\vec x), \ldots, y_m - Q_m(\vec x, y_1, \ldots, y_{m - 1})\}$ where  $Q_i \in \mathbb{Q}[\vec x, \vec y]$. Also, we know that $R_t = M$ for some $M \in \mathbb{Q}$.

Consider integers $M_1, \ldots, M_m$ where $M_i$ is equal to the product of denominators of all coefficients of polynomial $Q_i$. Also consider all polynomials $R_j(\vec x, \vec y)$ which was derived by using linear combination rule which means that $R_j = \alpha R_i + \beta R_k$. Then we consider \textbf{all} constants $\alpha$ and $\beta$ occurring in linear combination derivations in our proof. Let's denote the set of those constants as $\{\gamma_1, \gamma_2, \ldots, \gamma_f\} \subset \mathbb{Q}$. Now consider the set of all \textbf{denominators} of the constants in $\{\gamma_1, \gamma_2, \ldots, \gamma_f\}$ and denote this set as $\{\delta_1, \delta_2, \ldots, \delta_l\} \subset \mathbb{N}$. 

Also consider the products of all denominators of coefficients of polynomials $\{R_1, \ldots, R_t\}$. We will denote the set of those integers as $\{L_1, \ldots, L_t\} \subset \mathbb{N}$.

Now we will construct the $\mathsf{Ext}$-$\mathsf{PC}^{\surd}_{\mathbb{Z}}$ refutation of $\Gamma$ such that the constant in the end of this proof is equal to $M_1^{c_1} \cdot M_2^{c_2} \cdots M_m^{c_m} \cdot \delta_1^{c_{m + 1}} \cdots \delta_l^{c_{m + l}} \cdot  L_1^{c_{m + l + 1}} \cdots L_t^{c_{m + l + t}}  \cdot M$ where $\{c_1, c_2, \cdots, c_{m + l + t}\} \subset \mathbb{N} \cup \{0\}$.
 
Firstly, we will translate polynomials $Q_i$ into some integer polynomials $Q_i'$. Consider $Q_1'(\vec x) = M_1 \cdot Q_1(\vec x)$ where $M_1$ is equal to the product of denominators of all coefficients of polynomial $Q_1$. Then $Q_1' \in \mathbb{Z}[\vec x]$ and $T_1 = M_1$. Then consider $Q_2' (\vec x, y_1') = T_2 \cdot Q_2 (\vec x, \frac{y_1'}{T_1})$ where $T_2$ is equal to $T_1^{\alpha_{1 1}} \cdot M_2$ where $\alpha_{1 1}$ is an \textbf{arbitrary} non-negative integer such that $Q_2' \in \mathbb{Z}[\vec x, y_1']$. Then for every $i$ we consider $Q_i'(\vec x, y_1', \ldots, y_{i - 1}') = T_i \cdot Q_{i}(\vec x, \frac{y_1'}{T_1}, \ldots, \frac{y_{i -1 }'}{T_{i - 1}})$ where $T_i = T_1^{\alpha_{i 1}} \cdot T_2^{\alpha_{i 2}} \cdots T_{i - 1}^{\alpha_{i i - 1}} \cdot M_i$ where $\alpha_{i 1}, \ldots, \alpha_{i i - 1}$ are \textbf{arbitrary} integers such that $Q_{i}' \in \mathbb{Z}[\vec x, y_1', \ldots, y_{i - 1}']$. Note that we are not interested in the size of the integers $\alpha_{i j}$ so they could be arbitrary large.

Now we will construct $\mathsf{PC}_{\mathbb{Q}}^{\surd}$ refutation $\{R_1', \ldots, R_s'\}$ of the set $\Gamma'' = \Gamma \cup \{y_1' - Q_1'(\vec x), \ldots y_m' - Q_m'(\vec x, y_1', \ldots, y_{m - 1}')\}$ of the following form: this refutation duplicates the original refutation $\{R_1, \ldots, R_t\}$ in all cases except when the polynomial $R_i$ was derived by multiplying by some variable $y_j$ from some polynomial $R_k$. In this case we will multiply corresponding polynomial by $y_j'$ and then multiply it by $\frac{1}{T_j}$.  

Formally, we will prove the following claim:
\begin{claim}\label{cla:Q-Z trans}
There is an $\mathsf{PC}_{\mathbb{Q}}^{\surd}$ refutation $\{R_1', \ldots, R_s'\}$ of the set $\Gamma'' = \Gamma \cup \{y_1' - Q_1'(\vec x), \ldots y_m' - Q_m'(\vec x, y_1', \ldots, y_{m - 1}')\}$ for which the following properties holds:
\begin{itemize}
    \item For every polynomial $R_i'(\vec x, y_1', \ldots, y_m')$ one of the following equations holds:
    $R_i'(\vec x, y_1 \cdot T_1, \ldots, y_m \cdot T_m) = R_j(\vec x, y_1, \ldots, y_m)$  for some $j$ or $R_i'(\vec x, y_1 \cdot T_1, \ldots, y_m \cdot T_m) =  T_k \cdot R_j(\vec x, y_1, \ldots, y_m)$ for some $k$ and $j$.
    \item If $R_i'(\vec x, y_1', \ldots, y_m')$ was derived from $R_j'(\vec x, y_1', \ldots, y_m')$ and $R_k'(\vec x, y_1, \ldots, y_m)$ by taking linear combination with rational constants $\alpha$ and $\beta$ (which means that $R_i' = \alpha R_j' + \beta R_k'$), then $\alpha = \frac{1}{T_f}$ and $\beta = 0$ for some $f$ or there is some polynomial $R_h(\vec x, y_1', \ldots, y_m')$ which was derived from some polynomials $R_k$ and $R_l$ by using linear combination with constants $\alpha$ and $\beta$.
\end{itemize}

\end{claim}

\begin{proofclaim}
The proof is an easy (but lengthy) inductive argument and is given in the \hyperref[appendix]{Appendix}.
\end{proofclaim}

Now we will show that $\Gamma''$ has a $\mathsf{PC}_{\mathbb{Z}}^{\surd}$ refutation in which the constant in the end is equal to 
$$
M_1^{c_1} \cdot M_2^{c_2} \cdots M_m^{c_m} \cdot \delta_1^{c_{m + 1}} \cdots \delta_l^{c_{m + l}} \cdot  L_1^{c_{m + l + 1}} \cdots L_t^{c_{m + l + t}}  \cdot M.
$$
In order to do this we will fix a $\mathsf{PC}_{\mathbb{Q}}^{\surd}$ refutation $\{R_1', \ldots, R_s'\}$ of $\Gamma''$ with the properties from the \hyperref[cla:Q-Z trans]{Claim 3.4} and construct a $\mathsf{PC}_{\mathbb{Z}}^{\surd}$ refutation of $\Gamma''$ by induction. Moreover, we will construct a $\mathsf{PC}_{\mathbb{Z}}^{\surd}$ refutation $\{R_1'', \ldots, R_f''\}$ in which every polynomial $R_i''$ is equal to $M_1^{d_1} \cdot M_2^{d_2} \cdots M_m^{d_m} \cdot \delta_1^{d_{m + 1}} \cdots \delta_l^{d_{m + l}} \cdot  L_1^{d_{m + l + 1}} \cdots L_t^{d_{m + l + t}} \cdot R_i'$ for some non-negative integers $d_1, \ldots, d_{m + l + t}$ and some polynomial $R_i'$. 

Informally, we are going to multiply each line in our  $\mathsf{PC}_{\mathbb{Q}}^{\surd}$ refutation by some constant in order to get correct $\mathsf{PC}_{\mathbb{Z}}^{\surd}$ refutation. But since we can't divide polynomials in our $\mathsf{PC}_{\mathbb{Z}}^{\surd}$ refutation by any constant, we will duplicate original $\mathsf{PC}_{\mathbb{Q}}^{\surd}$ refutation multiplied by some constant of the form $M_1^{d_1} \cdot M_2^{d_2} \cdots M_m^{d_m} \cdot \delta_1^{d_{m + 1}} \cdots \delta_l^{d_{m + l}} \cdot  L_1^{d_{m + l + 1}} \cdots L_t^{d_{m + l + t}}$  every time we would like to simulate derivation in the original proof. 

\noindent\textbf{Induction statement:} Let $\{R_1', \ldots, R_i'\}$ be a $\mathsf{PC}_{\mathbb{Q}}^{\surd}$ derivation from $\Gamma''$ with the properties from the \hyperref[cla:Q-Z trans]{Claim 3.4}. Then there exists a $\mathsf{PC}_{\mathbb{Z}}^{\surd}$ derivation $\{R_1'', \ldots, R_f''\}$ from $\Gamma''$ such that 
\begin{itemize}
    \item $f \le 2 i^2$.
    \item There is some constant $F_i = M_1^{b_1} \cdot M_2^{b_2} \cdots M_m^{b_m} \cdot \delta_1^{b_{m + 1}} \cdots \delta_l^{b_{m + l}} \cdot  L_1^{b_{m + l + 1}} \cdots L_t^{b_{m + l + t}} \in \mathbb{N}$ such that 
    $$
    F_i \cdot R_1' = R_{f - i + 1}'', \; F_i \cdot R_2' = R_{f - i + 2}'', \: \ldots, \;  F_i \cdot R_i' = R_{f}''
    $$
\end{itemize}
\mbox{}\\ \noindent{\textit{Base case: }} If $i = 1$ then $R_i' \in \Gamma''$. Then we can take $R_{1}'' = R_i'$.

\mbox{}\\ \noindent{\textit{Induction step: }} Suppose we have already constructed the $\mathsf{PC}_{\mathbb{Z}}^{\surd}$ refutation $\{R_1'', R_2'', \ldots, R_{f}''\}$ for which the induction statement is true. Then there are four cases depending on the way the $R_{i + 1}'$ is derived.

\noindent{\textbf{Case 1}:} If $R_{i + 1}' \in \Gamma''$ then $F_{i + 1} = F_i$ and
$$
R_{f + 1}'' = R_{i + 1}', \:R_{f + 2}'' = F_{i + 1} \cdot R_1', \; R_{f + 3}'' = F_{i + 1} \cdot R_2', \: \ldots, \;  R_{f + i + 1}'' = F_{i + 1} \cdot R_i', \:, R_{f + i + 2}'' = F_{i + 1} \cdot R_{i + 1}'  
$$

\noindent{\textbf{Case 2}:} If $R_{i + 1}' = x_j R_l'$ or $R_{i + 1}' = y_j' R_l'$ then $F_{i + 1} = F_i$, 
$$
R_{f + 1}'' = F_{i + 1} \cdot R_1', \; R_{f + 2}'' = F_{i + 1} \cdot R_2', \: \ldots, \;  R_{f + i}'' = F_{i + 1} \cdot R_i'
$$
and  $R_{f + i + 1}'' = x_j R_{f - i + l}'' = F_{i + 1} \cdot R_{i + 1}'$ or $R_{f + i + 1}'' = y_j R_{f - i + l}'' = F_{i + 1} \cdot R_{i + 1}''$.

\noindent{\textbf{Case 3}:} If $R_{i + 1} = \alpha R_j + \beta R_k$ where $\alpha = \frac{p_1}{q_1}$ and $\beta = \frac{p_2}{q_2}$ where $\{p_1, q_1, p_2, q_2\} \subset \mathbb{Z}$. Then we can take $F_{i + 1} = q_1 q_2 F_i$, 
$$
R_{f + 1}'' = q_1 q_2 \cdot R_{f - i + 1}'' = F_{i + 1} \cdot R_1', \: R_{f + 2}'' = q_1 q_2 \cdot R_{f - i + 2}'' = F_{i + 1} \cdot R_2', \: \ldots, \: R_{f + i}'' = q_1 q_2 \cdot R_{f}'' = F_{i + 1} R_{i}'  
$$
and $R_{f + i + 1}'' = p_1 q_2 \cdot R_{f - i + j}'' + p_2 q_1 \cdot R_{f - i + k}'' = M_{i + 1} R_{i + 1}'$. From the \hyperref[cla:Q-Z trans]{Claim 3.4} we know that $\alpha = \frac{1}{T_k}$ for some $k$ and $\beta = 0$, or $q_2$ and $q_1$ are equal to some $\delta_k$ and $\delta_r$. From the induction statement we know that 
$$
F_i = M_1^{b_1} \cdot M_2^{b_2} \cdots M_m^{b_m} \cdot \delta_1^{b_{m + 1}} \cdots \delta_l^{b_{m + l}} \cdot  L_1^{b_{m + l + 1}} \cdots L_t^{b_{m + l + t}}.
$$
Then, since $T_k = M_1^{r_{1 k}} \cdots M_m^{r_{m k}}$, we know that 
$$
F_{i + 1} = M_1^{b_1'} \cdot M_2^{b_2'} \cdots M_m^{b_m'} \cdot \delta_1^{b_{m + 1}'} \cdots \delta_l^{b_{m + l}'} \cdot  L_1^{b_{m + l + 1}'} \cdots L_t^{b_{m + l + t}'},
$$
and the induction statement stays true.

\noindent{\textbf{Case 4}:} Suppose $R_{i + 1}'^2 = R_j'$. We know that $R_{i  +1}'(x_1, \ldots, x_n, y_1', \ldots,y_m') = R_{k}(x_1, \ldots, x_n, \frac{y_1'}{T_1}, \ldots, \frac{y_m'}{T_m})$ or $R_{i  +1}'(x_1, \ldots, x_n, y_1', \ldots,y_m') = T_h \cdot R_{k}(x_1, \ldots, x_n, \frac{y_1'}{T_1}, \ldots, \frac{y_m'}{T_m})$ for some $h$. Then we can take $M' = L_k \cdot T_1^{\alpha_1} \cdot T_2^{\alpha_2} \cdots T_m^{\alpha_m} = L_k \cdot M_1^{\alpha_1'} \cdot M_2^{\alpha_2'} \cdots M_m^{\alpha_m'}$ for some non-negative integers $\alpha_1, \ldots, \alpha_m$, such that $M' \cdot R_{i + 1}'$ is an integer polynomial. We know that such integers $\alpha_1, \ldots, \alpha_m$ exist since $L_k$ is the product of all denominators of coefficients of polynomial $R_k$.

Then we can take $F_{i + 1} = M' \cdot F_i$. It's obvious that $F_{i + 1} \cdot R_{i + 1}'$ is an integer polynomial. Then we can make the following $\mathsf{PC}_{\mathbb{Z}}^{\surd}$ derivation:
\begin{multline*}
    R_{f + 1}'' = F_i (M')^2 \cdot R_{f - i + j}'' =  (F_i M')^2 \cdot R_j', \\  R_{f + 2}' = M' \cdot R_{f - i + 1}' = F_{i + 1} \cdot R_1, \: R_{f + 3}' = M' \cdot R_{f - i + 2}' = F_{i + 1} \cdot R_2, \: \ldots, \: R_{f + i + 1}' = M' \cdot R_{f}' = F_{i + 1} R_{i}.
\end{multline*}
Then we can take $R_{f + i + 2}'' = F_i M' \cdot R_{i + 1}'$ and since $R_{f + 1}'' = (F_i M')^2 \cdot R_j'$ we know that $(R_{f + i + 2}'')^2 = R_{f + 1}''$ and we get a correct $\mathsf{PC}_{\mathbb{Z}}^{\surd}$ derivation. 

Since  $M' = L_p \cdot M_1^{\alpha_1'} \cdot M_2^{\alpha_2'} \cdots M_m^{\alpha_m'}$ we know that 
$$
F_{i + 1} = M_1^{b_1'} \cdot M_2^{b_2'} \cdots M_m^{b_m'} \cdot \delta_1^{b_{m + 1}'} \cdots \delta_f^{b_{m + l}'} \cdot  L_1^{b_{m + l + 1}'} \cdots L_t^{b_{m + l + t}'},
$$
and the induction statement stays true.

So now we have a $\mathsf{Ext}$-$\mathsf{PC}^{\surd}_{\mathbb{Z}}$ refutation of $\Gamma$ such that the constant in the end of this refutation is equal to $M_1^{c_1} \cdot M_2^{c_2} \cdots M_m^{c_m} \cdot \delta_1^{c_{m + 1}} \cdots \delta_l^{c_{m + l}} \cdot  L_1^{c_{m + l + 1}} \cdots L_t^{c_{m + l + t}}  \cdot M$. Suppose that $M = \frac{p'}{q'}$ where $p \in \mathbb{Z}$ and $q \in \mathbb{N}$. Then, from  \autoref{lower bound for integers depth-inf}  we know that  $M_1^{c_1} \cdot M_2^{c_2} \cdots M_m^{c_m} \cdot \delta_1^{c_{m + 1}} \cdots \delta_f^{c_{m + l}} \cdot  L_1^{c_{m + l + 1}} \cdots L_t^{c_{m + l + t}}  \cdot p'$ is divisible by every prime number less than $2^n$. Since $M_1, \ldots, M_m$, $\delta_1, \ldots, \delta_l$, $L_1, \ldots, L_t$ are positive integers we know that 
$
M_1 \cdot M_2 \cdots M_m \cdot \delta_1 \cdots \delta_l \cdot  L_1 \cdots L_t  \cdot p'
$
is divisible by every prime number less than $2^n$. Also we know that 
$$
\log \lceil M_1 \rceil + \cdots + \log \lceil M_m \rceil + \log \lceil \delta_1 \rceil + \cdots + \log \lceil \delta_l \rceil + \log \lceil L_1 \rceil  + \cdots +  \log \lceil L_t \rceil + \log \lceil p \rceil \le O(Size(S))
$$ 
because all constants $M_1, \ldots, M_m, L_1, \ldots, L_t$ are products of denominators in the lines of our refutation $\{R_1, \ldots, R_t\}$ and constants $\delta_1, \ldots, \delta_l$ are denominators of rationals in linear combinations used in our derivation. 

On the other hand, we know that 
$$
M_1 \cdot M_2 \cdots M_m \cdot \delta_1 \cdots \delta_l \cdot  L_1 \cdots L_t  \cdot p' \ge 2^{\Omega(n)}
$$
since our product is divisible by every prime number less than $2^n$. Then we know that $S \ge 2^{\Omega(n)}$.

\end{proof}

\newpage

\section{Connection between $\mathsf{Res}$-$\mathsf{Lin}$, $\mathsf{Ext}$-$\mathsf{PC}_{\mathbb{Q}}^{\surd}$ and $\mathsf{Ext}$-$\mathsf{PC}_{\mathbb{Q}}$}\label{section-4}
Following \cite{RT07}, we define $\mathsf{Res}$-$\mathsf{Lin}$ proof system.
\begin{definition}
A \textbf{disjunction of linear equations} is of the following general form: 
\begin{equation}
   (a_1^{(1)} x_1 + \ldots + a_n^{(1)} x_n = a_0^{(1)}) \vee \cdots \vee (a_1^{(t)} x_1 + \ldots + a_n^{(t)} x_n = a_0^{(t)})  
\end{equation}
where $t \ge 0$ and the coefficients $a_i^{j}$ are \textbf{integers} (for all $0 \le i \le n$, $1 \le j \le t$).  The semantics of such a disjunction is the natural one: We say that an assignment of integral values to the variables $x_1, \ldots, x_n$ satisfies (1) if and only if there exists $j \in \{1, \ldots, t\}$ so that the equation $a_1^{(j)} x_1 + \ldots + a_n^{(j)} x_n = a_0^{(j)}$ holds under the given assignment. 

The \textbf{size} of the disjunction of linear equations is $\sum_{i = 1}^{n} \sum_{j = 1}^{t} |a_i^{(j)}|$ if all coefficients are written in \textbf{unary} notation. If all coefficients are written in \textbf{binary} notation then the \textbf{size} is equal to $\sum_{i = 1}^{n} \sum_{j = 1}^{t} \lceil \log |a_i^{(j)}| \rceil$.
\end{definition}

\begin{definition}
Let $K := \{K_1, \ldots, K_m\}$ be a collection of disjunctions of linear equations. An $\mathsf{Res}$-$\mathsf{Lin}$ proof from $K$ of a disjunction of linear equations $D$ is a finite sequence $\pi = (D_1, \ldots, D_l)$ of disjunctions of linear equations, such that $D_l = D$ and for every $i \in \{1, \ldots, l\}$, either $D_i = K_j$ for some $j \in \{1, \ldots, m\}$, or $D_i$ is a Boolean axiom $(x_h = 0) \vee (x_h = 1)$ for some $h \in \{1, \ldots, n\}$, or $D_i$ was deduced by one of the following $\mathsf{Res}$-$\mathsf{Lin}$ inference rules, using $D_j$, $D_k$ for some $j, k < i$:
\begin{itemize}
    \item \textbf{Resolution}: Let $A, B$ be two, possibly empty, disjunctions of linear equations and let $L_1$, $L_2$ be two linear equations. From $A \vee L_1$ and $B \vee L_2$ derive $A \vee B \vee (\alpha L_1 + \beta L_2)$ where $\alpha, \beta \in \mathbb{Z}$.
    \item \textbf{Weakening}: From a (possibly empty) disjunction of linear equations $A$ derive $A \vee L$, where $L$ is an arbitrary linear equation over $\{x_1, \ldots, x_n\}$.
    \item \textbf{Simplification}: From $A \vee (k = 0)$ derive $A$, where $A$ is a, possibly empty, disjunction of linear equations and $k \neq 0$ is a constant.
    \item \textbf{Contraction}: From $A \vee L \vee L$ derive $A \vee L$, where $A$ is a, possibly empty, disjunction of linear equations and $L$ is some linear equation.
\end{itemize}
Note that we assume that the order of equations in the disjunction is not significant, while we contract identical equations, especially.  

An  $\mathsf{Res}$-$\mathsf{Lin}$ \textbf{refutation} of a collection of disjunctions of linear equations $K$ is a proof of the empty disjunction from $K$. The \textbf{size} of an $\mathsf{Res}$-$\mathsf{Lin}$ proof $\pi$ is the total size of all the disjunctions of linear equations in $\pi$. 

If all coefficients in our $\mathsf{Res}$-$\mathsf{Lin}$ proof $\pi$ are written in the \textbf{unary} notation then we denote this proof an $\mathsf{Res}$-$\mathsf{Lin}_{U}$ derivation. Otherwise, if all coefficients are written in the \textbf{binary} notation then we denote this proof an $\mathsf{Res}$-$\mathsf{Lin}_{B}$ derivation.
\end{definition}

\begin{note}
In the original $\mathsf{Res}$-$\mathsf{Lin}$ proof system duplicate linear equations can be discarded from the disjunction. Instead, we will use \textbf{contraction} rule explicitly. It is easy to see that both these variants of $\mathsf{Res}$-$\mathsf{Lin}$ system are equivalent.
\end{note}

\begin{definition} 
Let $D$ be  a disjunction of linear equations:
\begin{equation*}
   (a_1^{(1)} x_1 + \ldots + a_n^{(1)} x_n = a_0^{(1)}) \vee \cdots \vee (a_1^{(t)} x_1 + \ldots + a_n^{(t)} x_n = a_0^{(t)})  
\end{equation*}
We denote by $\widehat{D} $ its translation into the following system of polynomial equations:
$$
   y_1 \cdot y_2 \cdots y_t = 0 
$$
$$
    y_1 = a_1^{(1)} x_1 + \ldots + a_n^{(1)} x_n - a_0^{(1)}, \; y_2 = a_1^{(2)} x_1 + \ldots + a_n^{(2)} x_n - a_0^{(2)}, \; \ldots, \; y_t = a_1^{(t)} x_1 + \ldots + a_n^{(t)} x_n - a_0^{(t)}
$$
If $D$ is the empty disjunction, we define $\widehat{D} $ to be the single polynomial equation 1 = 0.
\end{definition}

Now we will prove that $\mathsf{Ext}$-$\mathsf{PC}_{\mathbb{Q}}^{\surd}$ p-simulates $\mathsf{Res}$-$\mathsf{Lin}_{B}$ and $\Sigma \Pi \Sigma$-$PC_{\mathbb{Q}}$ p-simulates $\mathsf{Res}$-$\mathsf{Lin}_{U}$.

\begin{theorem}
\label{Res-Lin-b simulation}
Let $\pi = (D_1, \ldots, D_l)$ be an  $\mathsf{Res}$-$\mathsf{Lin}_{B}$ proof sequence of $D_l$ from some collection of initial disjunctions of linear equations $Q_1, \ldots, Q_m$. Also consider $L_1, \ldots, L_t$ --- all affine forms that we have in all disjunctions in our $\mathsf{Res}$-$\mathsf{Lin}_{B}$ proof sequence. 

Then, there exists an $\mathsf{PC}_{\mathbb{Q}}^{\surd}$ proof of $\widehat{D}_l$ from $\widehat{Q}_1 \cup \ldots \cup \widehat{Q}_m \cup \{y_1 = L_1, y_2 = L_2, \ldots, y_t = L_t\}$ of size at most $O(p(Size(\pi)))$ for some polynomial $p$.
\end{theorem}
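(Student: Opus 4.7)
The plan is to proceed by induction on the length of the $\mathsf{Res}$-$\mathsf{Lin}_B$ proof $\pi=(D_1,\ldots,D_l)$ and simulate each inference rule by a short $\mathsf{PC}^{\surd}_{\mathbb{Q}}$ sub-derivation. By the definition of $\widehat{D}$, the goal at each step $i$ is to derive, for $D_i=(L_{j_1}=0)\vee\cdots\vee(L_{j_s}=0)$, the single polynomial equation $y_{j_1}\cdots y_{j_s}=0$, since the extension definitions $y_k-L_k$ making up the rest of $\widehat{D}_i$ are already among our axioms. The central technical convenience is that although $\mathsf{PC}^{\surd}$ can only multiply a line by one variable at a time, the products $y_A=y_{j_1}\cdots y_{j_s}$ that appear in each translated clause can be built up by iterated variable-multiplications at a cost proportional to $s$.

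Initial disjunctions are handled by invoking their given translations $\widehat{Q}_j$. The Boolean axiom $(x_h=0)\vee(x_h=1)$, whose translation is $y_a y_b=0$ with $y_a=x_h$ and $y_b=x_h-1$, is produced in a constant number of lines from the extension axioms $y_a-x_h$, $y_b-(x_h-1)$ together with the Boolean axiom $x_h^2-x_h$ (multiply $y_a-x_h$ by $y_b$, multiply $y_b-(x_h-1)$ by $x_h$, and sum with $x_h^2-x_h$). Weakening $A\Rightarrow A\vee(L=0)$ is simulated by iteratively multiplying the previously derived $y_A$ by $y_L$, and simplification $A\vee(k=0)\Rightarrow A$ with constant $k\neq 0$ is handled by multiplying the extension axiom $y-k$ through by $y_A$, subtracting to isolate $y_A\cdot k$, and scaling by $1/k$.

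The resolution step is where the most bookkeeping is required. Starting from the two previously derived polynomials $y_A\cdot y_{L_1}$ and $y_B\cdot y_{L_2}$, I iteratively multiply the first through by the variables of $y_B$ and the second through by the variables of $y_A$, yielding $y_A y_B y_{L_1}$ and $y_A y_B y_{L_2}$, and then form the linear combination $\alpha\cdot y_A y_B y_{L_1}+\beta\cdot y_A y_B y_{L_2}=y_A y_B(\alpha y_{L_1}+\beta y_{L_2})$. Since the polynomial identity $\alpha y_{L_1}+\beta y_{L_2}-y_{\alpha L_1+\beta L_2}=\alpha(y_{L_1}-L_1)+\beta(y_{L_2}-L_2)-(y_{\alpha L_1+\beta L_2}-(\alpha L_1+\beta L_2))$ is a linear combination of the three relevant extension axioms, multiplying this combination through by $y_A y_B$ and subtracting from the previous line yields exactly the target resolvent $y_A y_B y_{\alpha L_1+\beta L_2}=0$.

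The contraction rule $A\vee L\vee L\Rightarrow A\vee L$ is the decisive case and the place where the square root rule is essential: given a derivation of $y_A\cdot y_L^2=0$, I iteratively multiply by each variable of $y_A$ to obtain $(y_A y_L)^2=y_A^2 y_L^2=0$ as a single line, and then apply the square root rule with $R_l:=y_A y_L$, whose square is exactly this line, concluding $y_A y_L=0$. Summing over the $l$ rules of $\pi$, each simulation contributes $\mathrm{poly}(n+t)$ new lines and introduces only scalars whose bit-sizes are bounded by bit-sizes of scalars already appearing in $\pi$ (since resolution's $\alpha,\beta$ and simplification's $k$ are taken from $\pi$, and all other new scalars are $\pm 1$), which yields the claimed polynomial size bound. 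The main obstacle is indeed the contraction rule: without the square root rule there seems to be no uniform polynomial-size mechanism to eliminate a squared factor from a monomial product in plain $\mathsf{PC}$, which is precisely the reason the simulation cannot be carried out there.
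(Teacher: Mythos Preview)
Your proposal is correct and follows essentially the same strategy as the paper: induction on the length of the $\mathsf{Res}$-$\mathsf{Lin}_B$ proof, simulating each rule by a short $\mathsf{PC}^{\surd}_{\mathbb{Q}}$ sub-derivation, with the contraction step handled by squaring up the remaining factors and then invoking the square-root rule. Your treatment of the Boolean axiom and of the size accounting is in fact spelled out a bit more carefully than in the paper.
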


\begin{proof}
We proceed by induction on the number of lines in $\pi$.
\mbox{}\\ \noindent{\textit{Base case:}} An $\mathsf{Res}$-$\mathsf{Lin}_{B}$ axiom $Q_i$ is translated into $\widehat{Q_i}$ and $\mathsf{Res}$-$\mathsf{Lin}_{B}$ Boolean axiom  $(x_i = 0) \vee (x_i = 1)$ is translated into $\mathsf{PC}$ axiom $x_i^2 - x_i = 0$.
\mbox{}\\ \noindent{\textit{Induction step: }} Now we will simulate all $\mathsf{Res}$-$\mathsf{Lin}_{B}$ derivation rules in the $\mathsf{PC}_{\mathbb{Q}}^{\surd}$ proof.
\begin{itemize}
    \item \textbf{Resolution}: Assume that $D_i = A \vee B \vee (\alpha L_1 + \beta L_2)$ where $D_j = A \vee L_1$ and $D_k = B \vee L_2$. Then, we have already derived polynomial equations 
    $$
    y_{j 1} = (a_{j 1}^{(1)} x_1 + \ldots + a_{j n}^{(1)} x_n - a_{j 0}^{(1)}), \; \ldots, \; y_{j t_j} = (a_{j 1}^{(t_j)} x_1 + \ldots + a_{j n}^{(t_j)} x_n - a_{j 0}^{(t_j)}),
    $$
    $$
    y_{k 1} = (a_{k 1}^{(1)} x_1 + \ldots + a_{k n}^{(1)} x_n - a_{k 0}^{(1)}), \; \ldots, \; y_{k t_k} = (a_{k 1}^{(t_k)} x_1 + \ldots + a_{k n}^{(t_k)} x_n - a_{k 0}^{(t_k)}),
    $$
    $$
    y_{j 1} \cdot y_{j 2} \cdots y_{j t_j} = 0, \; y_{k 1} \cdot y_{k 2} \cdots y_{k t_k} = 0
    $$ 
    where 
    $$
    A = (a_{j 1}^{(2)} x_1 + \ldots + a_{j n}^{(2)} x_n = a_{j 0}^{(2)}) \vee \cdots \vee (a_{j 1}^{(t_j)} x_1 + \ldots + a_{j n}^{(t_j)} x_n = a_{j 0}^{(t_j)}), 
    $$
    $$
    B = (a_{k 1}^{(2)} x_1 + \ldots + a_{k n}^{(2)} x_n = a_{k 0}^{(2)}) \vee \cdots \vee (a_{k 1}^{(t_k)} x_1 + \ldots + a_{k n}^{(t_k)} x_n = a_{k 0}^{(t_k)})
    $$
    $$
    L_1 = (a_{j 1}^{(1)} x_1 + \ldots + a_{j n}^{(1)} x_n = a_{j 0}^{(1)}), \; L_2 = (a_{k 1}^{(1)} x_1 + \ldots + a_{k n}^{(1)} x_n = a_{k 0}^{(1)}).
    $$
    Then we can derive $y_{j 1} \cdot y_{j 2} \cdots y_{j t_j} \cdot y_{k 2} \cdots y_{k t_k} = 0$, $y_{j 1} \cdot y_{j 2} \cdots y_{j t_j} \cdot y_{k 2} \cdots y_{k t_k} = 0$ and thus $(\alpha y_{j 1} + \beta y_{k 1}) \cdot y_{j 2} \cdots y_{j t_j} \cdot y_{k 2} \cdots y_{k t_k} = 0$. Then there is some variable $y_{i 1}$ for which holds $y_{i 1} = \alpha (a_{j 1}^{(1)} x_1 + \ldots + a_{j n}^{(1)} x_n - a_{j 0}^{(1)}) + \beta (a_{k 1}^{(1)} x_1 + \ldots + a_{k n}^{(1)} x_n - a_{k 0}^{(1)})$ and we can derive $y_{i 1} = \alpha y_{j 1} + \beta y_{k 1}$. Then we can derive $y_{i 1} \cdot y_{j 2} \cdots y_{j t_j} \cdot y_{k 2} \cdots y_{k t_k} = 0$ which is  part of $\widehat{D}_i$.
    \item \textbf{Weakening}: Assume that $D_i = D_j \vee L$ where $L$ is a linear equation. Then, we have already derived polynomial equations 
    $$
    y_{j 1} = (a_{j 1}^{(1)} x_1 + \ldots + a_{j n}^{(1)} x_n - a_{j 0}^{(1)}), \; \ldots, \; y_{j t_j} = (a_{j 1}^{(t_j)} x_1 + \ldots + a_{j n}^{(t_j)} x_n - a_{j 0}^{(t_j)}),
    $$
    $$
    y_{j 1} \cdot y_{j 2} \cdots y_{j t_j} = 0.
    $$
    We know that there is some variable $y_0$ for which $y_0 = b_{1}x_1 + \ldots b_n x_n - b_0$ where $L$ is a linear equation $b_{1}x_1 + \ldots b_n x_n = b_0$. From $y_{j 1} \cdot y_{j 2} \cdots y_{j t_j} = 0$ we can derive $y_0 \cdot y_{j 1} \cdot y_{j 2} \cdots y_{j t_j} = 0$ which is part of $\widehat{D}_i$.
    \item \textbf{Simplification}: Suppose that $D_i = A$ and $D_j = A \vee (k = 0)$ where $k \in \mathbb{Z}$, $k \neq 0$. Then, we have already derived polynomial equations
    $$
    y_{j 1} = (a_{j 1}^{(1)} x_1 + \ldots + a_{j n}^{(1)} x_n - a_{j 0}^{(1)}), \; \ldots, \; y_{j t_j - 1} = (a_{j 1}^{(t_j - 1)} x_1 + \ldots + a_{j n}^{(t_j - 1)} x_n - a_{j 0}^{(t_j - 1)}), \; y_{j t_j} = k,
    $$
    $$
    y_{j 1} \cdot y_{j 2} \cdots y_{j t_j} = 0.
    $$
    From equation $y_{j 1} \cdot y_{j 2} \cdots y_{j t_j} = 0$ we can derive equation $y_{j 1} \cdot y_{j 2} \cdots y_{j t_j - 1} \cdot k = 0$ from which we can derive $y_{j 1} \cdot y_{j 2} \cdots y_{j t_j - 1} = 0$ which is part of $\widehat{D}_i$.
    \item \textbf{Contraction}: Assume that $D_i = A \vee L$ and $D_j \vee L \vee L$ where $L$ is a linear equation. 
    Then, we have already derived polynomial equations 
    $$
    y_{j 1} = (a_{j 1}^{(1)} x_1 + \ldots + a_{j n}^{(1)} x_n - a_{j 0}^{(1)}), \; \ldots, \; y_{j t_{j} - 1} = y_{j t_j} = (a_{j 1}^{(t_j)} x_1 + \ldots + a_{j n}^{(t_j)} x_n - a_{j 0}^{(t_j)}),
    $$
    $$
    y_{j 1} \cdot y_{j 2} \cdots y_{j t_j - 1} \cdot y_{j t_j} = 0.
    $$
    Then we can derive $y_{j t_j - 1} = y_{j t_j}$ and $y_{j 1} \cdot y_{j 2} \cdots y_{j t_j - 2} \cdot (y_{j t_j - 1}^2) = 0$. Using multiplication we can derive  $y_{j 1}^2 \cdot y_{j 2}^2 \cdots y_{j t_j - 2}^2 \cdot (y_{j t_j - 1}^2) = 0$ from which we can derive the equation $y_{j 1} \cdot y_{j 2} \cdots y_{j t_j - 1} = 0$ by using the square root rule. This equation is the last part of $\widehat{D}_i$ because other parts were derived earlier. 
\end{itemize}

\end{proof}

\begin{definition}
Let $\Gamma = \{P_1, \ldots, P_m\} \subset \mathbb{F}[x_1, \ldots, x_n]$ be a set of polynomials in variables $\{x_1, \ldots, x_n\}$ over a ring $R$ such that the system of equations $P_1 = 0, \ldots, P_m = 0$ has no solution. A $\Sigma \Pi \Sigma$-$PC_{\mathbb{Q}}$ refutation of $\Gamma$ is a $\mathsf{PC}_{R}$ refutation of a set $\Gamma' = \{P_1, \ldots, P_m, Q_1, \ldots, Q_m\}$ where $Q_i$ are polynomials of the form $Q_i = y_i - (a_{i 0} + \sum_j a_{i j} x_j)$ for some constants $a_{i j} \in R$. 

The size of the $\Sigma \Pi \Sigma$-$PC_{\mathbb{Q}}$ refutation is equal to the size of the  $\mathsf{PC}_{R}$ refutation of $\Gamma'$.
\end{definition}

\begin{theorem}
\label{Res-Lin-u simulation}
Let $\pi = (D_1, \ldots, D_l)$ be an $\mathsf{Res}$-$\mathsf{Lin}_{U}$ proof sequence of $D_l$, from some collection of initial disjunctions of linear equations $Q_1, \ldots, Q_m$. Then, there exists an $\Sigma \Pi \Sigma$-$PC_{\mathbb{Q}}$ proof of $\widehat{D}_l$ from $\widehat{Q}_1 \cup \ldots \cup \widehat{Q}_m$ of size at most $O(p(Size(\pi)))$ for some polynomial $p$.
\end{theorem}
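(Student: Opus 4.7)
The plan is to mimic Theorem~\ref{Res-Lin-b simulation} rule by rule, using the same translation $\widehat{D}$ and the same extension variables $y_i=L_i$ for every affine form $L_i$ occurring in $\pi$; these are linear in $\vec{x}$ and therefore legitimate extensions in $\Sigma\Pi\Sigma$-$PC_{\mathbb{Q}}$. Resolution, weakening, and simplification transfer verbatim because their simulations use only multiplications and linear combinations. The only step of the binary simulation that invoked the square-root rule was \emph{contraction}, so this is the only place that needs a new argument. The leverage is the fact that in the unary setting every affine form $L=a_0+\sum_j a_j x_j$ with $\sum_j|a_j|+|a_0|\le N\le Size(\pi)$ takes integer values only in $[-N,N]$ over the Boolean cube.

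Granting a $\mathrm{poly}(N)$-size derivation of the range polynomial identity
\[
P(y)\;:=\;\prod_{k=-N}^{N}(y-k)\;=\;0
\]
for $y=L$, contraction is simulated as follows. Since $0\in[-N,N]$, factor $P(y)=y\cdot g(y)$ with $g(0)=(-1)^{N}(N!)^{2}\neq 0$, write $g(y)=g(0)+y\cdot h(y)$, and observe the syntactic polynomial identity $y\cdot g(0)=P(y)-y^{2}\cdot h(y)$. From the hypothesis $A\cdot y^{2}=0$, derive $A\cdot y^{2}\cdot h(y)=0$ by multiplying through by $h(y)$; from the derived $P(y)=0$, derive $A\cdot P(y)=0$ by multiplying through by $A$; a linear combination of these two lines is $g(0)\cdot A\cdot y=0$, and a final scaling by $g(0)^{-1}\in\mathbb{Q}$ yields the contracted line $A\cdot y=0$.

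It remains to derive $P(y)=0$ in polynomial size, which I would do by induction on the partial sums of $L$. Introduce, for each $k$, the extension variable $y^{(k)}=L^{(k)}:=a_0+\sum_{j\le k}a_j x_j$, and set $N_k=|a_0|+\sum_{j\le k}|a_j|$ and $P_k(z)=\prod_{j=-N_k}^{N_k}(z-j)$; the base case $P_0(y^{(0)})=0$ is immediate from the extension axiom $y^{(0)}-a_0=0$. The inductive step relies on the congruence $x_k(x_k-1)^{i}\equiv 0\pmod{x_k^{2}-x_k}$ for $i\ge 1$: expanding $P_{k-1}(y^{(k-1)}+a_k(x_k-1))$ in powers of $(x_k-1)$ and using this congruence yields
\[
x_k\cdot P_{k-1}(y^{(k)}-a_k)\;\equiv\; x_k\cdot P_{k-1}(y^{(k-1)}),\qquad (1-x_k)\cdot P_{k-1}(y^{(k)})\;\equiv\;(1-x_k)\cdot P_{k-1}(y^{(k-1)}),
\]
whose right-hand sides both vanish by the inductive hypothesis. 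Multiplying each of these two zero lines by an explicit univariate polynomial in $y^{(k)}$ of degree $2|a_k|$---chosen to turn $P_{k-1}(y^{(k)}-a_k)$ and $P_{k-1}(y^{(k)})$ respectively into $P_k(y^{(k)})$---yields $x_k\cdot P_k(y^{(k)})=0$ and $(1-x_k)\cdot P_k(y^{(k)})=0$, and their sum is $P_k(y^{(k)})=0$.

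The main obstacle is the size bound, and this is precisely why the partial-sum extension variables are essential: each intermediate polynomial is univariate in $y^{(k)}$ or bivariate in $(y^{(k-1)},x_k)$, with only $O(N^{2})$ monomials, whereas $P_{k-1}(L^{(k-1)})$ written in the original $x_j$'s could a priori have exponentially many. With $n$ inductive steps, $\mathrm{poly}(N)$ work per step, and $O(Size(\pi))$ contraction simulations of size $\mathrm{poly}(N)$ each, the total size stays $\mathrm{poly}(Size(\pi))$ as required.
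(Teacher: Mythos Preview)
Your proposal is correct and follows essentially the same route as the paper: the same translation $\widehat{D}$, the same extension variables for affine forms, the verbatim reuse of the resolution/weakening/simplification simulations from Theorem~\ref{Res-Lin-b simulation}, and the same range-polynomial trick for contraction (derive $\prod_{k=-N}^{N}(y-k)=0$, factor out $y$, and cancel against $A\cdot y^{2}=0$ using that $g(0)=(-1)^{N}(N!)^{2}\neq 0$). The only difference is cosmetic: where the paper invokes the black-box lemma from \cite{IMP19_new} to build up the range constraint for the linear form, you give an explicit induction on partial sums $y^{(k)}=a_0+\sum_{j\le k}a_j x_j$---which is in effect a direct proof of (the special case of) that lemma needed here. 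One small imprecision: your intermediate lines are not strictly ``univariate in $y^{(k)}$ or bivariate in $(y^{(k-1)},x_k)$''---the substitution step that turns $x_k\cdot P_{k-1}(y^{(k-1)})$ into $x_k\cdot P_{k-1}(y^{(k)}-a_k)$ goes through polynomials in all three of $y^{(k)},y^{(k-1)},x_k$---but the monomial count and coefficient bit-sizes remain polynomial in $N$, so the size bound stands.
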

\begin{proof}
To prove this theorem we will use the following lemma from \cite{IMP19_new}:
\begin{lemma}[\cite{IMP19_new}]
Let $\Gamma = \{P_1, \ldots, P_a, Q_1, \ldots, Q_b, X, Y\}$ be a set of polynomials such that 
$$
P_1 = x_1 - (x - 1), \; P_2 = x_2 - (x - 2), \; \ldots, P_a = x_a - (x - a), 
$$
$$
Q_1 = y_1 - (y - 1), \; Q_2 = y_2 - (y - 2), \; \ldots, Q_b = y_b - (y - b), 
$$
$$
X = x \cdot x_1 \cdot x_2 \cdots x_a, \; Y = y \cdot y_1 \cdot y_2 \cdots y_b.
$$
Then we can derive $\Gamma'$ from $\Gamma$ in $\Sigma \Pi \Sigma$-$PC_{\mathbb{Q}}$ with derivation of size $poly(a b)$ where $\Gamma' = \{Z_0, Z_1, \ldots, Z_{a + b}, Z\}$ and
$$
Z_0 = z - (x + y), \; Z_1 = z_1 - (x + y + 1), \; Z_2 =  z_2 - (x + y + 2) , \; \ldots, Z_{a + b} = z_{a + b} - (x + y + a + b),
$$
$$
Z = z \cdot z_1 \cdot z_2 \cdots z_{a + b}.
$$
\end{lemma}

Now we will prove the theorem by induction on lines in $\pi$.
\mbox{}\\ \noindent{\textit{Base case: }} An $\mathsf{Res}$-$\mathsf{Lin}_{B}$ axiom $Q_i$ is translated into $\widehat{Q_i}$ and $\mathsf{Res}$-$\mathsf{Lin}_{B}$ Boolean axiom $(x_i = 0) \vee (x_i = 1)$ is translated into $\mathsf{PC}$ axiom $x_i^2 - x_i = 0$.
\mbox{}\\ \noindent{\textit{Induction step: }} Now we will simulate all $\mathsf{Res}$-$\mathsf{Lin}_{B}$ derivation rules in the $\mathsf{Ext}$-$\mathsf{PC}_{\mathbb{Q}}^{\surd}$ proof.
\begin{itemize}
    \item \textbf{Resolution}, \textbf{Weakening}, \textbf{Simplification} rules simulation is the same as in \autoref{Res-Lin-b simulation}.
    \item \textbf{Contraction}: Assume that $D_i = A \vee L$ and $D_j \vee L \vee L$ where $L$ is a linear equation. 
    Then, we have already derived polynomial equations 
    $$
    y_{j 1} = (a_{j 1}^{(1)} x_1 + \ldots + a_{j n}^{(1)} x_n - a_{j 0}^{(1)}), \; \ldots, \; y_{j t_{j} - 1} = y_{j t_j} = (a_{j 1}^{(t_j)} x_1 + \ldots + a_{j n}^{(t_j)} x_n - a_{j 0}^{(t_j)}),
    $$
    $$
    y_{j 1} \cdot y_{j 2} \cdots y_{j t_j - 1} \cdot y_{j t_j} = 0.
    $$
    Then we can derive $y_{j t_j - 1} = y_{j t_j}$ and $y_{j 1} \cdot y_{j 2} \cdots y_{j t_j - 2} \cdot (y_{j t_j - 1}^2) = 0$. Using lemma we can introduce new variables $\{z_{-M}, \ldots, z_M\}$ and derive 
    $$
    z_{-M} = y_{j t_{j - 1}} + M, \;, z_{-M + 1} = y_{j t_{j - 1}} + M - 1, \ldots, z_{0} = y_{j t_{j - 1}}, \; z_M =  y_{j t_{j - 1}} - M,
    $$
    $$
    z_{-M} \cdot z_{-M + 1} \cdots z_{M - 1} \cdot z_M = 0,
    $$
    where $M = |a_{j 1}^{(t_{j - 1})}| + |a_{j 2}^{(t_{j - 1})}| + \ldots + |a_{j n}^{(t_{j - 1})}|$. Then we can substitute $y_{j t_{j}} - k$ for each $z_{k}$ one by one and get equation
    $$
    f(y_{j t_{j - 1}}) = 0
    $$
    where $f(y_{j t_{j - 1}}) = b_1 \cdot y_{j t_{j - 1}} + b_2 \cdot y_{j t_{j - 1}}^2 + \ldots +  b_{2 M + 1} \cdot y_{j t_{j - 1}}^{2 M + 1} $ is some polynomial from $\mathbb{Z}[y_{j t_{j - 1}}]$ and $b_1 = (M!)^2 \cdot (-1)^{M}$. Then we can derive the following equation by using multiplication rule:
    \begin{multline*}
      y_{j 1} \cdot y_{j 2} \cdots y_{j t_j - 2} \cdot f(y_{j t_{j - 1}}) = b_1 \cdot y_{j 1} \cdot y_{j 2} \cdots y_{j t_j - 2} \cdot y_{j t_j - 1} + \\ + y_{j 1} \cdot y_{j 2} \cdots y_{j t_j - 2} \cdot (y_{j t_j - 1}^2) \cdot (b_2 + b_3 \cdot y_{j t_{j - 1}} + \ldots +  b_{2 M + 1} \cdot y_{j t_{j - 1}}^{2 M - 1}) =  0.
    \end{multline*}
    Now, using the equation  $y_{j 1} \cdot y_{j 2} \cdots y_{j t_j - 2} \cdot (y_{j t_j - 1}^2) = 0$ we can derive $ b_1 \cdot y_{j 1} \cdot y_{j 2} \cdots y_{j t_j - 2} \cdot y_{j t_j - 1} = 0$ and since $b_1 \neq 0$ we can derive $y_{j 1} \cdot y_{j 2} \cdots y_{j t_j - 2} \cdot y_{j t_j - 1} = 0$. This equation is the last part of $\widehat{D}_i$ because other parts were derived earlier.

\end{itemize}

\end{proof}


Now we will show that our lower bound provides an interesting counterpart to a result from  \cite{PT18_new}.

\begin{theorem}[\cite{PT18_new}]
Any $\mathsf{Res}$-$\mathsf{Lin}_{B}$ refutation of $1 + 2x_1 + \ldots + 2^n x_n = 0$ is of the size $2^{\Omega(n)}$.
\end{theorem}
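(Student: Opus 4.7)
The plan is to derive this as an immediate corollary of the two main results already established in the paper: the exponential lower bound for $\mathsf{Ext}$-$\mathsf{PC}_{\mathbb{Q}}^{\surd}$ refutations of $\mathsf{BVP}_n$ (Theorem \ref{lower bound q}) and the p-simulation of $\mathsf{Res}$-$\mathsf{Lin}_B$ by $\mathsf{Ext}$-$\mathsf{PC}_{\mathbb{Q}}^{\surd}$ (Theorem \ref{Res-Lin-b simulation}). The instance $1 + 2x_1 + \ldots + 2^n x_n = 0$ is identical to $\mathsf{BVP}_{n+1}$ after reindexing $x_i \mapsto x_{i+1}$ and setting $x_1 = 1$ implicitly, so it suffices to lower bound the size of $\mathsf{Res}$-$\mathsf{Lin}_B$ refutations of $\mathsf{BVP}_n$ (the Boolean axioms $(x_i = 0) \vee (x_i = 1)$ are built into $\mathsf{Res}$-$\mathsf{Lin}$).

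Given a hypothetical $\mathsf{Res}$-$\mathsf{Lin}_B$ refutation $\pi$ of $\mathsf{BVP}_n$ of size $S$, I would apply Theorem \ref{Res-Lin-b simulation} to obtain a $\mathsf{PC}_{\mathbb{Q}}^{\surd}$ proof of $\widehat{\square} = (1 = 0)$ from the set $\widehat{Q}_1 \cup \cdots \cup \widehat{Q}_m \cup \{y_1 = L_1, \ldots, y_t = L_t\}$, where the $y_i - L_i$ are linear extension axioms and the $L_i$ range over all affine forms occurring in $\pi$. This proof has size $O(p(S))$ for some fixed polynomial $p$. The translated axioms $\widehat{Q}_i$ coming from $\mathsf{BVP}_n$ consist of the single polynomial $1 + 2x_1 + \ldots + 2^n x_n$ together with the Boolean axioms $x_i^2 - x_i$ (from the translation of $(x_i = 0)\vee(x_i = 1)$), which is exactly $\mathsf{BVP}_n$.

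Since the extension axioms $y_i - L_i$ are of the form $y_i - Q_i(\vec x, y_1, \ldots, y_{i-1})$ for polynomial (in fact linear) $Q_i$, this is by Definition \ref{def:Depth-inf-PC} nothing but an $\mathsf{Ext}$-$\mathsf{PC}_{\mathbb{Q}}^{\surd}$ refutation of $\mathsf{BVP}_n$ of size $O(p(S))$. Theorem \ref{lower bound q} then forces $O(p(S)) \geq 2^{\Omega(n)}$, hence $S \geq 2^{\Omega(n)}$.

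There is no real obstacle here; the only point that requires any care is checking that the extension axioms produced by the simulation in Theorem \ref{Res-Lin-b simulation} indeed fit the syntactic form required by $\mathsf{Ext}$-$\mathsf{PC}_{\mathbb{Q}}^{\surd}$ (they do, since linear forms in $\vec{x}$ are a special case of the polynomials $Q_i$ allowed in Definition \ref{def:Depth-inf-PC}), and that the reindexing between $\mathsf{BVP}_n$ and the statement's principle loses at most a constant factor in $n$, so the $2^{\Omega(n)}$ bound is preserved.
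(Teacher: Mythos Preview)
Your proposal is correct and follows essentially the same approach as the paper: combine Theorem~\ref{lower bound q} with the p-simulation of Theorem~\ref{Res-Lin-b simulation} to conclude that any $\mathsf{Res}$-$\mathsf{Lin}_B$ refutation of $\mathsf{BVP}_n$ has size $2^{\Omega(n)}$. The paper's proof is in fact a two-sentence version of what you wrote; your additional remarks about the extension axioms fitting Definition~\ref{def:Depth-inf-PC} and the harmless reindexing between the stated principle and $\mathsf{BVP}_n$ are correct clarifications that the paper simply leaves implicit.
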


\begin{proof}
From \autoref{lower bound q} we know that any $\mathsf{Ext}$-$\mathsf{PC}_{\mathbb{Q}}^{\surd}$ refutation of $\mathsf{BVP}_n$  requires size $2^{\Omega(n)}$ and thus from \autoref{Res-Lin-b simulation}  we know that there is some polynomial $p$ such that for any  $\mathsf{Res}$-$\mathsf{Lin}_{B}$ refutation of $\mathsf{BVP}_n$ of size $S$ the equation $p(S) \ge C_0 \cdot 2^{C_1 \cdot n}$ holds. Then we know that for some constant $C$ the equation $S \ge 2^{C \cdot n}$ holds.

\end{proof}

\section*{Open Problems}
\begin{enumerate}
    \item \autoref{Res-Lin-b simulation} says that $\mathsf{Ext}$-$\mathsf{PC}_{\mathbb{Q}}^{\surd}$ p-simulates any $\mathsf{Res}$-$\mathsf{Lin}_{B}$ derivation. Is the square root rule necessary, that is, can we p-simulate $\mathsf{Res}$-$\mathsf{Lin}_{B}$ refutation in the $\mathsf{Ext}$-$\mathsf{PC}_{\mathbb{Q}}$ proof system?
    \item A major question is to prove an exponential lower bound on the size of $\Sigma \Pi \Sigma$-$\mathsf{PC}_{\mathbb{Q}}$ refutation of a translation of a formula in CNF.
\end{enumerate}

\section*{Acknowledgement}
I would like to thank Edward A. Hirsch for guidance and useful discussions at various stages of this work. Also I wish to thank Dmitry Itsykson and Dmitry Sokolov for very helpful comments concerning this work.

\small
\bibliographystyle{plain}

\begin{thebibliography}{10}

\bibitem{Ajt94}
Mikl\'{o}s Ajtai.
\newblock The independence of the modulo $p$ counting principles.
\newblock {\em Electronic Colloquium on Computational Complexity, ECCC},
  (Report no.: TR94-014), December 1994.

\bibitem{AGHT19_new}
Yaroslav Alekseev, Dima Grigoriev, Edward~A. Hirsch, and Iddo Tzameret.
\newblock Semi-algebraic proofs, {IPS} lower bounds and the $\tau$-conjecture:
  Can a natural number be negative?
\newblock In {\em \STOC{2020}}, pages 54--67, 2020.

\bibitem{BeameIKPP96}
Paul Beame, Russell Impagliazzo, Jan Kraj{\'{\i}}{\v{c}}ek, Toniann Pitassi,
  and Pavel Pudl{\'a}k.
\newblock Lower bounds on {H}ilbert's {N}ullstellensatz and propositional
  proofs.
\newblock {\em Proc. London Math. Soc. (3)}, 73(1):1--26, 1996.

\bibitem{BussGIP01}
Sam Buss, Dima Grigoriev, Russell Impagliazzo, and Toniann Pitassi.
\newblock Linear gaps between degrees for the polynomial calculus modulo
  distinct primes.
\newblock {\em Journal of Computer and System Sciences}, 62(2):267 -- 289,
  2001.

\bibitem{BussIKPRS96}
Samuel~R. Buss, Russell Impagliazzo, Jan Kraj{\'{\i}}{\v{c}}ek, Pavel
  Pudl{\'{a}}k, Alexander~A. Razborov, and Ji{\v{r}}{\'{\i}} Sgall.
\newblock Proof complexity in algebraic systems and bounded depth {F}rege
  systems with modular counting.
\newblock {\em \ComputationalComplexity}, 6(3):256--298, 1996.

\bibitem{CHVATAL1989455}
V.~Chv{\'a}tal, W.~Cook, and M.~Hartmann.
\newblock On cutting-plane proofs in combinatorial optimization.
\newblock {\em Linear Algebra and its Applications}, 114-115:455 -- 499, 1989.
\newblock Special Issue Dedicated to Alan J. Hoffman.

\bibitem{CEI96}
Matthew Clegg, Jeffery Edmonds, and Russell Impagliazzo.
\newblock Using the {G}roebner basis algorithm to find proofs of
  unsatisfiability.
\newblock In {\em Proceedings of the 28th Annual ACM Symposium on the Theory of
  Computing (Philadelphia, PA, 1996)}, pages 174--183, New York, 1996. ACM.

\bibitem{CookReckhow74b}
Stephen~A. Cook and Robert~A. Reckhow.
\newblock {C}orrections for ``{O}n the lengths of proofs in the propositional
  calculus (preliminary version)''.
\newblock {\em {SIGACT} News}, 6(3):15--22, July 1974.

\bibitem{CookReckhow74a}
Stephen~A. Cook and Robert~A. Reckhow.
\newblock {O}n the lengths of proofs in the propositional calculus (preliminary
  version).
\newblock In {\em \STOC{1974}}, pages 135--148, 1974.
\newblock For corrections see Cook-Reckhow~\cite{CookReckhow74b}.

\bibitem{CR79}
Stephen~A. Cook and Robert~A. Reckhow.
\newblock The relative efficiency of propositional proof systems.
\newblock {\em J. Symb. Log.}, 44(1):36--50, 1979.
\newblock This is a journal-version of Cook-Reckhow~\cite{CookReckhow74a} and
  Reckhow~\cite{Rec76:PhD}.

\bibitem{CCT87}
W.~Cook, C.~R. Coullard, and G.~Turan.
\newblock On the complexity of cutting plane proofs.
\newblock {\em Discrete Applied Mathematics}, 18:25--38, 1987.

\bibitem{GH03}
Dima Grigoriev and Edward~A. Hirsch.
\newblock Algebraic proof systems over formulas.
\newblock {\em Theoret. Comput. Sci.}, 303(1):83--102, 2003.
\newblock Logic and complexity in computer science (Cr\'eteil, 2001).

\bibitem{GP14}
Joshua~A. Grochow and Toniann Pitassi.
\newblock Circuit complexity, proof complexity, and polynomial identity
  testing: The ideal proof system.
\newblock {\em J. {ACM}}, 65(6):37:1--37:59, 2018.

\bibitem{IMP19_new}
Russell Impagliazzo, Sasank Mouli, and Toniann Pitassi.
\newblock The surprising power of constant depth algebraic proofs.
\newblock In {\em Proceedings of the 35th Annual ACM/IEEE Symposium on Logic in
  Computer Science}, LICS '20, page 591–603, New York, NY, USA, 2020.
  Association for Computing Machinery.

\bibitem{IPS99}
Russell Impagliazzo, Pavel Pudl{\'{a}}k, and Ji{\v{r}}{\'{\i}} Sgall.
\newblock Lower bounds for the polynomial calculus and the gr{\"{o}}bner basis
  algorithm.
\newblock {\em Computational Complexity}, 8(2):127--144, 1999.

\bibitem{ITSYKSON2020102722}
Dmitry Itsykson and Dmitry Sokolov.
\newblock Resolution over linear equations modulo two.
\newblock {\em Annals of Pure and Applied Logic}, 171(1):102722, 2020.

\bibitem{Kra98-Discretely}
Jan Kraj{\'{\i}}{\v{c}}ek.
\newblock Discretely ordered modules as a first-order extension of the cutting
  planes proof system.
\newblock {\em The Journal of Symbolic Logic}, 63(4):1582--1596, 1998.

\bibitem{PT18_new}
Fedor Part and Iddo Tzameret.
\newblock Resolution with counting: Different moduli and dag-like lower bounds.
\newblock In {\em 12th Innovations in Theoretical Computer Science Conference,
  {ITCS} 2020, January, 2020, Seattle, WA, {USA}}, 2020.

\bibitem{Pit97}
Toniann Pitassi.
\newblock Algebraic propositional proof systems.
\newblock In {\em Descriptive complexity and finite models (Princeton, NJ,
  1996)}, volume~31 of {\em DIMACS Ser. Discrete Math. Theoret. Comput. Sci.},
  pages 215--244. Amer. Math. Soc., Providence, RI, 1997.

\bibitem{Pit98}
Toniann Pitassi.
\newblock Unsolvable systems of equations and proof complexity.
\newblock In {\em Proceedings of the International Congress of Mathematicians,
  Vol. III (Berlin, 1998)}, number Vol. III, pages 451--460, 1998.

\bibitem{RT07}
Ran Raz and Iddo Tzameret.
\newblock Resolution over linear equations and multilinear proofs.
\newblock {\em Ann. Pure Appl. Logic}, 155(3):194--224, 2008.

\bibitem{MR1691494}
Alexander~A. Razborov.
\newblock Lower bounds for the polynomial calculus.
\newblock {\em Comput. Complexity}, 7(4):291--324, 1998.

\bibitem{Rec76:PhD}
Robert Reckhow.
\newblock {\em On the lengths of proofs in the propositional calculus}.
\newblock PhD thesis, University of Toronto, 1976.
\newblock Technical Report No . 87.

\bibitem{Sok20}
Dmitry Sokolov.
\newblock (semi)algebraic proofs over \{$\pm$ 1\} variables.
\newblock In {\em Proceedings of the 52nd Annual ACM SIGACT Symposium on Theory
  of Computing}, STOC 2020, page 78–90, New York, NY, USA, 2020. Association
  for Computing Machinery.

\bibitem{Tse68}
Grigori Tseitin.
\newblock {\em On the complexity of derivations in propositional calculus}.
\newblock Studies in constructive mathematics and mathematical logic Part II.
  Consultants Bureau, New-York-London, 1968.

\end{thebibliography}

\normalsize

\appendix
\section*{Appendix}\label{appendix}
\begin{claim-3-4}
There is an $\mathsf{PC}_{\mathbb{Q}}^{\surd}$ refutation $\{R_1', \ldots, R_s'\}$ of the set $\Gamma'' = \Gamma \cup \{y_1' - Q_1'(\vec x), \ldots y_m' - Q_m'(\vec x, y_1', \ldots, y_{m - 1}')\}$ for which the following properties holds:
\begin{itemize}
    \item For every polynomial $R_i'(\vec x, y_1', \ldots, y_m')$ one of the following equations holds:
    $R_i'(\vec x, y_1 \cdot T_1, \ldots, y_m \cdot T_m) = R_j(\vec x, y_1, \ldots, y_m)$  for some $j$ or $R_i'(\vec x, y_1 \cdot T_1, \ldots, y_m \cdot T_m) =  T_k \cdot R_j(\vec x, y_1, \ldots, y_m)$ for some $k$ and $j$.
    \item If $R_i'(\vec x, y_1', \ldots, y_m')$ was derived from $R_j'(\vec x, y_1', \ldots, y_m')$ and $R_k'(\vec x, y_1, \ldots, y_m)$ by taking linear combination with rational constants $\alpha$ and $\beta$ (which means that $R_i' = \alpha R_j' + \beta R_k'$), then $\alpha = \frac{1}{T_f}$ and $\beta = 0$ for some $f$ or there is some polynomial $R_h(\vec x, y_1', \ldots, y_m')$ which was derived from some polynomials $R_k$ and $R_l$ by using linear combination with constants $\alpha$ and $\beta$.
\end{itemize}

\end{claim-3-4}

\begin{proofclaim}
We will construct $\mathsf{PC}_{\mathbb{Q}}^{\surd}$ refutation $\{R_1', R_2', \ldots, R_{s}'\}$ of the set $\Gamma''$ by induction. \smallskip

\noindent\textbf{Induction statement:} 
Let $\{R_1, \ldots, R_i\}$ be a $\mathsf{PC}_{\mathbb{Q}}^{\surd}$ derivation from $\Gamma'$. Then there exists a $\mathsf{PC}_{\mathbb{Q}}^{\surd}$ derivation $\{R_1', \ldots, R_p'\}$ from $\Gamma''$ such that 
\begin{itemize}
    \item $p \le 2 i$.
    \item For every $R_j(x_1, \ldots, x_n, y_1, \ldots, y_m)$ there exists some $R_k'(x_1, \ldots, x_n, y_1', \ldots, y_m')$ such that 
    $$
    R_k'(x_1, \ldots, x_n, T_1 \cdot y_1, \ldots, T_m \cdot y_m) = R_j(x_1, \ldots, x_n, y_1, \ldots, y_m).
    $$
    \item All the properties mentioned in the claim are true for our derivation $\{R_1', \ldots, R_p'\}$.
\end{itemize}

\mbox{}\\ \noindent{\textit{Base case: }} If $i = 1$ then $R_i \in \Gamma'$. If $R_i \in \Gamma$ then we can take $R_1' = R_1$. Otherwise, if $R_i = y_j - Q_j(\vec x)$ then we can take $R_1' = y_j' - Q_j'(\vec x, y_1', \ldots, y_{j - 1}')$ and $R_2' = \frac{y_j' - Q_j'(\vec x, y_1', \ldots, y_{j - 1}')}{T_j}$. Then it's obvious that 
$$
R_2'(\vec x, T_1 \cdot y_1, \ldots, T_m \cdot y_m) = R_1(\vec x, y_1, \ldots, y_m).
$$

\mbox{}\\ \noindent{\textit{Induction step: }} Suppose we have already constructed the $\mathsf{PC}_{\mathbb{Q}}^{\surd}$ refutation $\{R_1', R_2', \ldots, R_{p}'\}$ for which the induction statement is true. Now we have five cases depending on the way the $R_{i + 1}$ is derived.

\noindent{\textbf{Case 1}:} If $R_{i + 1} \in \Gamma'$ then this case is equivalent to the base case of induction. 

\noindent{\textbf{Case 2}:} If $R_{i + 1} = \alpha R_{j} + \beta R_l$ then $R_{p + 1}' = \alpha R_{j'}' + \beta R_{l'}'$ where $R_{j'}'(x_1, \ldots, x_n, T_1 \cdot y_1, \ldots, T_m \cdot y_m) = R_j(x_1, \ldots, x_n, y_1, \ldots, y_m)$ and $R_{l'}'(x_1, \ldots, x_n, T_1 \cdot y_1, \ldots, T_m \cdot y_m) = R_l(x_1, \ldots, x_n, y_1, \ldots, y_m)$. 

\noindent{\textbf{Case 3}:} If $R_{i + 1} = x_l \cdot R_j$ then $R_{p + 1}' = x_l \cdot R_{j'}'$ where 
$R_{j'}'(x_1, \ldots, x_n, T_1 \cdot y_1, \ldots, T_m \cdot y_m) = R_j(x_1, \ldots, x_n, y_1, \ldots, y_m)$. 

\noindent{\textbf{Case 4}:} If $R_{i + 1}^2 = R_j$ then we take $R_{p + 1}'(x_1, \ldots, x_n, y_1', \ldots, y_m') = R_{i + 1}(x_1, \ldots, x_n, \frac{y_1'}{T_1}, \ldots, \frac{y_m'}{T_m})$. By the induction statement we know that 
$$
R_{j}(x_1, \ldots, x_n, y_1, \ldots, y_m) = R_{j'}'(x_1, \ldots, x_n, T_1 \cdot y_1', \ldots, T_m \cdot y_m')
$$
for some $R_{j'}'$. Thus we know that 
$$
R_{j}(x_1, \ldots, x_n, \frac{y_1'}{T_1}, \ldots, \frac{y_m'}{T_m}) = R_{j'}'(x_1, \ldots, x_n, y_1', \ldots,y_m').
$$
So we know that 
\begin{multline*}
    R_{p + 1}'(x_1, \ldots, x_n, y_1', \ldots, y_m')^2 = R_{i + 1}(x_1, \ldots, x_n, \frac{y_1'}{T_1}, \ldots, \frac{y_m'}{T_m})^2  = \\ = R_{j}(x_1, \ldots, x_n, \frac{y_1'}{T_1}, \ldots, \frac{y_m'}{T_m}) =  R_{j'}'(x_1, \ldots, x_n, y_1', \ldots,y_m')
\end{multline*}
and $R_{p + 1}'$ is derived from $R_{j'}'$.

\noindent{\textbf{Case 5}:} If $R_{i + 1} = y_l \cdot R_j$ then we take $R_{p + 1}' = y_l' \cdot R_{j'}'$ and $R_{p + 2}' = \frac{R_{p + 1}'}{T_l}$ where $R_{j'}'(x_1, \ldots, x_n, T_1 \cdot y_1, \ldots, T_m \cdot y_m) = R_j(x_1, \ldots, x_n, y_1, \ldots, y_m)$. 

It's easy to see that in all these cases the induction statement stays true. 
\end{proofclaim}

\end{document}